\DeclareSymbolFont{usualmathcal}{OMS}{cmsy}{m}{n}
\DeclareSymbolFontAlphabet{\mathcal}{usualmathcal}
\definecolor{martin}{rgb}{0,.4,1}
\newcommand{\p}{\bm{p}}
\newcommand{\q}{\bm{q}}
\newcommand{\m}{\bm{m}}
\newcommand{\rr}{\bm{r}}
\newcommand{\del}{\bm{d}}
\newcommand{\ppi}{\bm{\pi}}
\newcommand{\D}{{\rm d}}
\newcommand{\J}{\mathbb{J}}
\newcommand{\RR}{\mathbb{R}}
\newcommand{\LL}{\mathbb{L}}
\newcommand{\st}{\mathcal{S}}
\newcommand{\ord}[1]{{\mathcal{O}\left(#1\right)}}
\newcommand{\T}{\mathcal{T}}
\newcommand{\eye}{\mathds{1}}
\newcommand{\id}{\mathbbm{1}}
\newcommand{\de}{{\rm d}}
\newcommand{\dt}{{\rm d}t\,}
\newcommand{\Tr}[1]{\mathrm{Tr}\left[ #1\right]} 
\newcommand{\norbra}[1]{\left( #1\right)}
\newcommand{\curbra}[1]{\left\{ #1\right\}}
\newcommand{\ket}[1]{\left.\left|{#1}\right.\right\rangle}
\newcommand{\bra}[1]{\left.\left\langle{#1}\right.\right|}
\newcommand{\ketbra}[2]{\ket{#1} \!\! \bra{#2}}
\theoremstyle{plain}
\newtheorem{theorem}{Theorem}
\theoremstyle{definition}
\newtheorem{definition}{Definition}
\theoremstyle{plain}
\newtheorem{lemma}{Lemma}
\theoremstyle{plain}
\newtheorem{corollary}{Corollary}
\begin{document}

\begin{center}{\Large \textbf{
Characterizing (non-)Markovianity through Fisher Information
}}\end{center}

\begin{center}
Paolo Abiuso, 
Matteo Scandi,
Dario De Santis and 
Jacopo Surace
\end{center}

\begin{center}
ICFO-Institut de Ciencies Fotoniques, The Barcelona Institute of\\	Science and Technology, Castelldefels (Barcelona), 08860, Spain

\end{center}



\section*{Abstract}
{\bf
A non-isolated physical system typically loses information to its environment, and when such loss is irreversible the evolution is said to be Markovian. Non-Markovian effects are studied by monitoring how information quantifiers, such as the distance between physical states, evolve in time. Here we show that the Fisher information metric emerges as a natural object to study in this context; we fully characterize the relation between its contractivity properties and Markovianity, both from the mathematical and operational point of view.
We prove, for classical dynamics, that Markovianity 
is equivalent to the monotonous contraction of the Fisher metric at all points of the set of states. At the same time, operational witnesses of non-Markovianity based on the dilation of the Fisher distance cannot, in general, detect all non-Markovian evolutions, unless specific physical postprocessing is applied to the dynamics. Finally, we show for the first time that non-Markovian dilations of Fisher distance between states at any time correspond to backflow of information about the initial state of the dynamics at time 0, via Bayesian retrodiction.
All the presented results can be lifted to the case of quantum dynamics by considering the standard CP-divisibility framework.

{\color{red}
NOTE: the published version of this paper, at \href{https://scipost.org/SciPostPhys.15.1.014}{SciPostPhys.15.1.014}, contains a mistake in the formulation and proof of Theorem 4. After several unheard attempts to contact the editors and publish a correction, we post here the updated manuscript, which includes a modification of Theorem 4 and the discussion previous to it. 
}
}

\vspace{10pt}
\noindent\rule{\textwidth}{1pt}
\setcounter{tocdepth}{1}
\tableofcontents\thispagestyle{fancy}
\noindent\rule{\textwidth}{1pt}
\vspace{10pt}

	\section{Introduction}
	The information contained in an evolving open system may undergo two possible dynamical regimes, dubbed Markovian or non-Markovian. 
	Markovian evolutions are characterized by memoryless environments, where at each time the dynamical trajectory can be represented by physical transformations that solely depend on the immediately previous time step \cite{breuer2002theory,rivas2012open}. It follows that in this regime the information contained in the system undergoes a monotonic degradation.
	On the other hand, non-Markovian dynamics are distinguished by memory effects, meaning that at any time the evolution might in general depend on all the previous time steps of the trajectory. This property allows information to flow back into the system.
	The complete characterization of non-Markovianity passes through the classification of the possible backflow phenomena that these evolutions can offer.
	This task is at the core of the non-Markovian witnessing problem~\cite{rivas2014quantum,breuer2016colloquium}. More specifically, one needs to find information quantifiers and specific initializations of the system (which may include ancillas) that are able to signal the non-Markovian nature of the evolution via revivals of an otherwise monotonically decaying information.
	Different quantifiers have been considered in this context, such as for state discrimination \cite{bylicka2017constructive,BD}, channel capacity \cite{Bognachannel}, volume of accessible states \cite{LPP} and correlations \cite{DDSMJ,Janek}. This approach allows to understands how we can gain benefits from non-Markovian evolutions and, at the same time, it is committed to test the correspondence between its phenomenological and mathematical definition.
	Notice that this is relevant both for the dynamics of classical stochastic systems, as well as that of open quantum systems. In the following, we will set our discussion on the classical case, which allows us to introduce all the relevant physical and mathematical insights in a straightforward manner. However, all the presented framework and results can be extended to the case of quantum dynamics, via few tedious technical steps (cf. Appendices~\ref{app:fisherAndTrace} and~\ref{app:proofs}).
	
	A quantity that has attracted a lot of attention from the community is the trace distance. In the classical case, it is given by
	\begin{align}
 \label{eq:D_tr_def}
		D_{\rm Tr}(\p,\q)=|\p-\q|=\sum_i |p_i-q_i|\;,
	\end{align}
	where $\p$ and $\q$ are two classical probability distributions. Such distance quantifies the distinguishability between the states $\p$ and $\q$, as the error probability of distinguishing the two in a single-shot measurement is lower bounded by $P_{\rm err}=\frac{2-|\p-\q|}{4}$ \cite{nielsen2002quantum}. Moreover, this quantity decreases under physical maps, leading to a monotonic decrease for Markovian evolutions~\cite{rivas2014quantum} (cf. Appendix~\ref{app:TRdist}). Hence, one consequence associated to the loss of information due to Markovianity is a continuous decrease in the ability of an agent to discriminate between any two states.
	
	Still, the effects linked to  Markovianity go well beyond what can be quantified by the trace distance alone. In the following we study an alternative quantifier: the Fisher information distance. This can be defined from its value between two infinitesimally close points. That is, given a small perturbation $|\del|\ll 1$ it reads, at leading order $\mathcal{O}(|\del|^2)$,
	\begin{eqnarray}
		\label{eq:fish_def}
		D_{\rm Fish}^2(\p,\p+\del)\simeq
		\left\langle \del ,\del \right\rangle_{\p} :=
		\frac{1}{2}\sum_i \frac{d_i^2}{p_i}\;,
	\end{eqnarray}
	where we define the Fisher scalar product
	\begin{align}\left\langle \bm{a} , \bm{b} \right\rangle_{\p} :=
	\sum \frac{a_ib_i}{2 p_i}\;.
 \label{eq:fish_prod_def}
	\end{align}

The Fisher distance between two infinitesimally close points is sometimes called \emph{Fisher Information}.
More precisely, this is defined when considering the state $\p_\theta$ to be parametrized by some variable $\theta$. Then it is possible to define
\begin{align}\lim_{\delta\theta\rightarrow 0}2\frac{D_{\rm Fish}^2(\p_\theta,\p_{\theta+\delta\theta})}{\delta\theta^2} = \sum_i \frac{(\partial_{\theta}p_i)^2}{p_i}\;.
\end{align}
This quantity is usually termed Fisher Information (w.r.t. $\theta$),
and in the rest of this work we will consider it a synonym of the infinitesimal Fisher distance, or Fisher metric~\eqref{eq:fish_def}, as they coincide up to constant factor. Moreover, while only making use of the local properties of the Fisher metric, our main results in Theorems~\ref{theo:markov_iff_contract},\ref{theo:no-go} and \ref{theo:witness} are statements that hold also when integrating the Fisher metric to a finite distance.\footnote{
Notice that the metric~\eqref{eq:fish_def}  coincides with the differential of the square root of $\p$. That is, at leading order,
$
		D_{\rm Fish}^2(\p,\p+\del)=2|\sqrt{\p}-\sqrt{\p+\delta \p}|^2\;.
$
This means that the Fisher metric maps the set of states
to a sphere sector, via the change of variable $\bf{y}=\sqrt{p}$~\cite{bengtsson-zyczkowski2006,sidhu2020geometric_perspective}. Accordingly, it is possible to integrate Eq.~\eqref{eq:fish_def} over the set of physical states, which is simply the set of positive normalized vectors
$
\mathcal{S}(\mathbb{R}^N):=\{\p \in \mathbb{R}^N| p_i\geq 0 \wedge \sum_i p_i =\sum_i (\sqrt{p_i})^2=1\}
$,
to obtain the finite Fisher distance~\cite{bengtsson-zyczkowski2006,wootters1981statistical}
\begin{align}
D^{\rm B}_{\rm Fish}(\p,\q)=\sqrt{2}\arccos{(\sqrt{\p}\cdot\sqrt{\q})}\;.
\label{eq:dFish_ang}
\end{align}
The expression just obtained corresponds to integrating on the surface of the sphere. On the other hand, one could consider the Euclidean distance between two vectors, which is given by:
\begin{align}
  D^{\rm H}_{\rm Fish}(\p,\q)=\sqrt{2|\sqrt{\p}-\sqrt{\q}|^2} =2\sqrt{1-\sqrt{\p}\cdot\sqrt{\q}}\;.  
\end{align}
It should be noticed that the corresponding geodesic is given by the segment connecting $\p$ to $\q$. Hence, in this case the intermediate points of the optimal trajectory lay outside of the space of normalized vectors (while retaining positivity).
In the literature, $D^{\rm B}_{\rm Fish}$ and $D^{\rm H}_{\rm Fish}$ are known respectively as the \emph{Bhattacharyya distance} and the \emph{Hellinger distance}. Here, we will not distinguish between the two, as each is a monotonous function of the other, and they locally coincide.
} 
The Fisher Information has numerous operational interpretations: in metrology it is used to derive the Cramér-Rao bound~\cite{Cramer46,Rao92}, a fundamental limit on the precision with which $\theta$ can be estimated; it quantifies the asymptotic distinguishability between multiple copies of two states (Chernoff bound~\cite{Chernoff52}); moreover, it also arises as the infinitesimal expansion of the relative entropy ~\cite{kullback1997information} (in fact, it can be shown that any $f$-divergence locally behaves as the Fisher information~\cite{csiszarInformationTheoryStatistics2004}).
	
	While the relation between Fisher metric and Markovianity has been previously partially analyzed~\cite{lu2010quantum} here we characterize it completely. In fact, a key characterisation of the Fisher metric is given by the Chentsov's theorem: this says that the Fisher information is the unique Riemannian metric that contracts under the action of all stochastic maps~\cite{cencov2000statistical,campbell1986extended} (for the quantum case, see Petz~\cite{petzMonotoneMetricsMatrix1996}). This is the starting point of our work. In particular, we study whether this strong relation between stochasticity and contractivity of the Fisher information can be reversed. That is, is it true that a map is Markovian \emph{if and only if} it contracts monotonically the Fisher information? What are the operational consequences of the contraction/dilation of such metric?

	\section{Framework}
	Before passing to the main treatment, we introduce here the main objects of interest  for the rest of the work.
	A stochastic map (or \emph{channel}) $T^{(t,0)} $ is a linear operator that evolves probability vectors (or \emph{states}) $\p \in \mathcal{S}(\mathbb{R}^N)$, in the symplex defined as $\mathcal{S}(\mathbb{R}^N):=\{\p \in \mathbb{R}^N| p_i\geq 0 \wedge \sum_i p_i=1\}$, from a time $0$ to $t$,
	\begin{align}
		\label{eq:evolution}
		T^{(t,0)} [\p(0)]=\p(t)\;.
	\end{align}
	It follows from this definition that, for every $t$,
	\begin{align}
		\label{eq:T_require}
		\sum_i T^{(t,0)}_{ij}=1\;,\quad T^{(t,0)}_{ij}\geq 0 \;\; \forall \;i,j\;.
	\end{align}
	In fact, the elements $T^{(t,0)}_{ij}$ can be interpreted as the conditional probability of ending up in microstate $i$ at time $t$ starting from $j$ at time $0$, i.e.  $T^{(t,0)}_{ij}=P(i,t|j,0)$.
	
	Assuming the channel to be continuous and differentiable in time, as well as invertible~\footnote{Invertibility of the maps can can be always physical satisfied by introducing undetectable $\varepsilon$-noises to the dynamics.}, one can define the intermediate channel $T^{(t,s)}$ between two increasing times $s$ and $t$ as
	\begin{align}
		\label{eq:divisibility_classic}
		T^{(t,s)}\equiv T^{(t,0)}\circ {T^{(s,0)}}^{-1}\;,\quad t\geq s\geq 0\;.
	\end{align}
	Channels with these properties constitutes the class of \emph{smooth evolutions}.
	From its definition, it is easy to verify that $T^{(t,s)}$ still satisfies
	$
	\sum_i T^{(t,s)}_{ij}=1 \; \forall j	
	$, 
	which can be equivalently rewritten as
	\begin{align}
	\label{eq:Tts_req}
	    T_{jj}^{(t,s)}=1-\sum_{i\neq j} T^{(t,s)}_{ij} \quad \forall j\;,
	\end{align}
	 and corresponds to the requirement that the dynamics preserves the normalisation.
	
	In the limit of infinitesimal time-steps, a smooth evolution $T^{(t,0)}$ is generated by the rate matrix
	\begin{align}
		R^{(t)}\equiv \lim_{\delta t\rightarrow 0} \frac{T^{(t+\delta t,t)}-\id}{\delta t}\;,
	\end{align}
	such that $\frac{\de}{\dt}T^{(t,0)}=R^{(t)} \circ T^{(t,0)}$. Thanks to the condition in Eq.~\eqref{eq:Tts_req} (see also Eq.~\eqref{eq:Rate_conditions_a} below), one can always decompose $R^{(t)}$ as
	\begin{align}
		\label{eq:canonical_basis}
		R^{(t)}=\sum_{i\neq j} a^{(t)}_{i\leftarrow j}\left( \ketbra{i}{j}-\ketbra{j}{j} \right)\,,
	\end{align}
	where $a^{(t)}_{i\leftarrow j}$ are real coefficients called~\emph{rates}.
	
	A smooth evolution $T$ is called \emph{stochastic-divisible} or \emph{Markovian}~\footnote{Markovian evolutions are an essential tool for the description of multi-time process $\{t_0,t_1,\dots\}$ where the evolution of the state only depends on the latest previous sampling of it, i.e. $P(i_k,t_k|j_{k-1},t_{k-1};j_{k-2},t_{k-2};\dots)=P(i_k,t_k|j_{k-1},t_{k-1})$, a condition equivalent to Eq.~\eqref{eq:divisibility}~\cite{rivas2014quantum}.} \cite{rivas2012open,rivas2014quantum} if for any partition of the interval $[0,t]$ it can be split into intermediate channels, all of which are stochastic 
	\begin{align}
		\label{eq:divisibility}
		T^{(t,0)}=T^{(t,t_{k-1})}\circ T^{(t_{k-1},t_{k-2})}\circ \dots \circ  T^{(t_1,0)}\;.
	\end{align} 
	Since the composition of two Markovian evolutions is again Markovian, one can check the stochastic-divisibility of a channel by studying maps of the form $T^{(t+\delta t,t)}\approx \id + \delta t \, R^{(t)}$ for all possible times.
	 is, we require $T^{(t+\delta t,t)}$ to be stochastic, imposing on the rate matrix, via Eq.~\eqref{eq:T_require},
	\begin{subequations}
	\label{eq:Rate_conditions}
	\begin{align}
	\label{eq:Rate_conditions_a}
		\sum_i (\delta_{ij} +\delta t R_{i,j}^{(t)}) = 1+\delta t \sum_i  R_{i,j}^{(t)} = 1 \qquad \forall j, \\ \delta_{ij} + \delta t R_{i,j}^{(t)}\geq 0 \qquad\forall \;i,j\;\;,
	\end{align}
	\end{subequations}
	where $\delta_{ij}$ denotes the Kronecker delta. The first conditions implies that $\sum_i  R_{i,j}^{(t)} = 0$, which leads to the canonical decomposition~\eqref{eq:canonical_basis}. 	
	If we now focus on the second condition in Eq.~\eqref{eq:Rate_conditions} we see that $ R_{i,j}^{(t)}\geq 0$ whenever $i\neq j$. In the parametrisation above this means that $a^{(t)}_{i\leftarrow j}\geq 0$.	
	It follows that the rate matrices generating stochastic evolutions form a cone, whose elements are matrices of the form~\eqref{eq:canonical_basis} having all the rates $a_{i\leftarrow j}$ positive.   Since a smooth evolution $T$ is Markovian if and only if $T^{(t+\delta t,t)}$ is stochastic $\forall t$, we can give the following definition:
	\begin{definition}
		\label{def:Markov} A smooth evolution is Markovian if and only if for all times $t$ the rates $a_{i\leftarrow j}^{(t)}$ are positive for all pairs of microstates $i\neq j$.
	\end{definition}

    \paragraph{Remark.}
    In the case of quantum dynamics the evolutions are given by Completely Positive and Trace Preserving (CPTP) superoperator $\mathcal{T}^{(t,0)}$ acting on density matrices, giving rise to the evolution $\mathcal{T}^{(t,0)}[\rho(0)]=\rho(t)$. In analogy to Eq.~\eqref{eq:divisibility}, we adopt the canonical definition of quantum Markovianity as CP-divisibility~\cite{rivas2014quantum}, i.e., $\mathcal{T}^{(t,0)}$  is said to be Markovian if for any partition of the interval $[0,t]$, it can be written as the composition of intermediate maps that are CPTP. As for classical dynamics, Markovianity of smooth quantum evolutions becomes equivalent to the positivity of the rates of the master equation~\cite{hall2014canonical}. For simplicity of exposition, all the following results are presented in the classical scenario, but thanks to a technical Lemma (Appendix~\ref{app:fisherAndTrace}) we can lift them to the quantum regime. 
    The precise proofs are deferred to Appendix~\ref{app:proofs}. 
    
    \section{Contractivity of the Fisher metric and non-Markovianity detection}
    \label{sec:theos123}
	From now on we will omit time-dependency when no confusion can arise.
	The Fisher distance between any two points decreases under the action of stochastic maps. This directly implies that under Markovian dynamics the Fisher metric contracts continuously. Indeed, simple algebra (cf. Appendix~\ref{app:FI_case}) yields, for infinitesimal $\del$,
	\begin{align}
		\label{eq:dt_fish}
		\frac{\de}{\dt} D^2_{\rm Fish}(\p,\p+\del)=-\sum_{i\neq j} \,a_{i\leftarrow j}I_{i\leftarrow j} \leq 0\;,
	\end{align}
	where we implicitly defined
	\begin{align}
		\label{def:fish_flow}
		I_{i\leftarrow j}:=\frac{1}{2}\left( \frac{d_i}{p_i}-\frac{d_j}{p_j}\right)^2 p_j
	\end{align}
	as the~\emph{Fisher information flow} associated to the rate $a_{i\leftarrow j}$.  These are positive objects, so that the contraction of the Fisher metric is directly associated to the positivity of the rates $a_{i\leftarrow j}$. The quantity in Eq.~\eqref{eq:dt_fish}, that is, the rate of contraction of the Fisher metric is the main object of study in this work.

Note that, such derivative is physically defined at time $t$ only for points in the image of $T^{(t,0)}$, while other points of $\st(\mathbb{R}^N)$ are not guaranteed, in general, to be physical states after applying the subsequent evolution $T^{(t',t)}$. However, by considering the infinitesimal evolution $T^{(t+\delta t,t)}\simeq\eye+\delta t R^{(t)}$, for each point \emph{in the interior} of $\st(\mathbb{R}^N)$ there exists a $\delta t$ small enough so that the corresponding evolved state is still physical.
This allows the rate of contraction $\frac{\de}{\dt}D_{\rm Fish}(\p,\q)|_t$ to be well defined at all times for all points in the interior of $\st(\mathbb{R}^N)$. This is also the same set of points where the Fisher metric itself~\eqref{eq:fish_def} is non-singular.

	Now suppose that  for some time $t$ there is a negative rate $a^{(t)}_{\tilde{i}\leftarrow\tilde{j}}<0$ (i.e., the evolution is non-Markovian). Is this sufficient to reverse the contraction of the Fisher information? The positive answer is given by the following
	\begin{theorem}
		\label{theo:markov_iff_contract}
		A smooth evolution $T^{(t,0)}$ is Markovian if and only if it induces a decrease in Fisher distance between any two points in $\st(\mathbb{R}^N)$ 
		at all times, i.e. 
		\begin{align}
		\label{eq:theo1}
		T^{(t,0)} \text{ is Markovian }
		\Leftrightarrow
		D_{\rm Fish}(T^{(t+\delta t,t)}[\p],T^{(t+\delta t,t)}[\q])\leq D_{\rm Fish}(\p,\q)\; \forall t,\;\forall \p,\q\in \st(\mathbb{R}^N)\;.
		\end{align}
	\end{theorem}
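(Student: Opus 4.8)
The plan is to prove the two implications separately, treating the forward direction as essentially settled and concentrating the effort on the converse. For Markovian $\Rightarrow$ contractive, I would note that divisibility makes every intermediate map $T^{(t+\delta t,t)}$ a bona fide stochastic map, so Chentsov's theorem directly yields contraction of the (finite or infinitesimal) Fisher distance; at the infinitesimal level this is already encoded in Eq.~\eqref{eq:dt_fish}, whose right-hand side is manifestly non-positive once all rates $a_{i\leftarrow j}$ are non-negative. The content of the theorem therefore lies in the converse, which I would establish by contraposition: assuming a single negative rate $a^{(t)}_{\tilde i\leftarrow\tilde j}<0$ at some time $t$, I must exhibit an interior state $\p$ and a direction $\del$ along which the Fisher distance \emph{grows}, contradicting~\eqref{eq:theo1}. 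Since the finite-distance hypothesis specialised to $\q=\p+\del$ forces the infinitesimal contraction $\frac{\de}{\dt}D^2_{\rm Fish}(\p,\p+\del)\le 0$ for every interior $\p$ and every $\del$, it suffices to produce one pair $(\p,\del)$ for which the right-hand side of~\eqref{eq:dt_fish} is strictly positive.

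The natural variables are the ratios $v_i:=d_i/p_i$, in terms of which the contraction rate becomes the quadratic form $\frac{\de}{\dt}D^2_{\rm Fish}=-\tfrac12\sum_{i\neq j}a_{i\leftarrow j}(v_i-v_j)^2 p_j$, cf.~\eqref{def:fish_flow}. This form depends only on the differences $v_i-v_j$, hence is invariant under $v_i\mapsto v_i+c$; the normalisation constraint $\sum_i d_i=0$ (needed for $\p+\del$ to be a state) can thus always be met by a harmless shift of $v$ that leaves the form unchanged, and the overall scale of $\del$ is irrelevant to the sign. I am therefore free to prescribe $v$ up to these symmetries and to choose $\p$ in the open simplex at will.

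The crux is to isolate the single negative contribution $a_{\tilde i\leftarrow\tilde j}(v_{\tilde i}-v_{\tilde j})^2 p_{\tilde j}$ from the competing non-negative ones of the remaining rates. The naive symmetric choice supported on $\{\tilde i,\tilde j\}$ fails, because activating $v_{\tilde j}\neq 0$ switches on the ``leakage'' flows out of $\tilde j$, weighted by the large source probability $p_{\tilde j}$, which spoil the sign. The fix is to leave $\tilde j$ at the background value, $v_{\tilde j}=0$, set $v_{\tilde i}=1$, and $v_k=0$ for all remaining $k$: now the only nonzero differences $v_i-v_j$ are those involving the index $\tilde i$, so the contraction rate collapses to flows \emph{into} $\tilde i$ (with sources $p_j$) and flows \emph{out of} $\tilde i$ (with source $p_{\tilde i}$), while the difference $(v_{\tilde i}-v_{\tilde j})^2=1$ keeps the negative-rate term alive. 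I would then concentrate the state on $\tilde j$ by taking $p_{\tilde j}\to 1$, $p_{\tilde i}\to 0$ and $p_k\to 0$ for $k\neq\tilde i,\tilde j$: the out-of-$\tilde i$ flows are suppressed by $p_{\tilde i}$, the into-$\tilde i$ flows from bystander states are suppressed by $p_k$, and the only survivor is precisely the flow carried by the negative rate, yielding $\frac{\de}{\dt}D^2_{\rm Fish}\to -\tfrac12\,a_{\tilde i\leftarrow\tilde j}>0$.

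The main obstacle is exactly this isolation step, and it is where the argument must be made quantitative rather than taken as a bare limit: for fixed finite rates the unwanted contributions are bounded by $C\,p_{\tilde i}$ (outflow from $\tilde i$) and $C'\sum_{k\neq\tilde i,\tilde j}p_k$ (inflow from bystanders), so choosing these masses small enough compared with $|a_{\tilde i\leftarrow\tilde j}|$ makes the form strictly negative at a genuinely interior point, as required. Finally, I would remark that, because the Bhattacharyya and Hellinger finite distances are both monotone functions of $\sqrt{\p}\cdot\sqrt{\q}$ and reduce to the same metric infinitesimally, the construction is insensitive to which finite $D_{\rm Fish}$ is used, and that the identical scheme transfers to the CP-divisible quantum setting through the Fisher/trace lemma of Appendix~\ref{app:fisherAndTrace}.
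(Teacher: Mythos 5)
Your proposal is correct and follows essentially the same route as the paper: the forward direction via Chentsov/Eq.~\eqref{eq:dt_fish}, and the converse by isolating the single negative-rate term through the limit $p_{\tilde j}\to 1$, $p_{\tilde i},p_k\to 0$ with the perturbation ratio $d_i/p_i$ supported on $\tilde i$ — which is exactly the paper's choice in Eq.~\eqref{eqapp:d_p_choice} rewritten in the variables $v_i=d_i/p_i$. Your explicit use of the shift invariance $v_i\mapsto v_i+c$ to enforce $\sum_i d_i=0$ is a nice touch that the paper only handles implicitly, but the argument is the same.
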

	\begin{proof}
		Chentsov theorem guarantees that the Fisher distance contracts under stochastic maps. This is also confirmed by our Eq.~\eqref{eq:dt_fish}. Moreover, notice that local contractivity of a metric always implies contractivity of the global distance, thanks to the triangle inequality. As the $\Rightarrow$ implication in Eq.~\eqref{eq:theo1} is trivial from the above discussion, only the proof of the opposite $\Leftarrow$ is needed. For that, suppose that the first instance of non-Markovianity happens between time $t$ and $t+\delta t$. In order to prove the statement it is sufficient a single counterexample, which we can find locally, considering
		the Fisher distance between any two infinitesimally close points $\p$ and $\p+\del$, with $|\del|\ll 1$, evolving according to $T^{(t+\delta t,t)}$.
		Then, for any negative rate $a^{(t)}_{\tilde{i}\leftarrow\tilde{j}}<0$, one can find a point $\p$ and a perturbation $\del$ such that $\frac{\de}{\dt}D_{\rm Fish}^2(\p,\p+\del)>0$. 
		In fact, assume without loss of generality that $a_{1\leftarrow 2}<0$ and consider $\p$ and $\del$ of the form
		\begin{align}
			\label{eqapp:d_p_choice}
			\p=
			\begin{pmatrix}
				\mathcal{O}(\varepsilon)\\ 
				1-\varepsilon \\
				\mathcal{O}(\varepsilon)\\
				\vdots\\
				\mathcal{O}(\varepsilon)\,,
			\end{pmatrix}\, ,
			\quad
			\del=
			\begin{pmatrix}
				\mathcal{O}(\varepsilon)\\
				\mathcal{O}(\varepsilon) \\
				\mathcal{O}(\varepsilon^2)\\
				\vdots\\
				\mathcal{O}(\varepsilon^2)
			\end{pmatrix}\,,
		\end{align}
		where $\varepsilon$ is an arbitrary small number and the vectors $\p$ and $\p +\del$ are properly normalised. Notice that it is always possible to choose $\p$ and $\p+\del$ in the interior of $\st(\RR^N)$ (i.e., with strictly positive components). Inserting this expression in Eq.~\eqref{eq:dt_fish} we find that the only term of order $\ord{1}$ comes from setting $i=1, j=2$ in the sum above. That is, at leading order,
		\begin{align}
			\frac{\de}{\dt}\,D_{\rm Fish}^{2}(\p,\p+\del)=-a_{1\leftarrow 2}\, \frac{d_1^2}{p_1^2}\,+\ord{\varepsilon}\;.
		\end{align}
		Since $a_{1\leftarrow 2}<0$ we can always find a $\varepsilon$ small enough so that this quantity is strictly positive. Hence, for any non-Markovian dynamics there always exists $\p$ and $\del$ such that $D_{\rm Fish}^{2}(\p,\p+\del)$ locally increases, proving the Theorem.
	\end{proof}

	Theorem~\ref{theo:markov_iff_contract} can be seen as a completion of Chentsov's Theorem, as it implies that not only the Fisher information decreases under Markovian evolutions, but also that an evolution contracting the Fisher distance between any two points has to be Markovian. 
	The proof generalizes to the case of quantum dynamics in the canonical CP-divisibility framework~\cite{rivas2014quantum}, by considering a copy of the system on which the dynamics acts trivially, i.e., the evolution is given by $\mathcal{T}^{(t,0)}\otimes\eye_N$ and the set of states is considered on the global bipartition (cf. Appendix~\ref{app:qThm1}).
	
	Interestingly, a similar theorem cannot hold for the trace-distance as one can explicitly construct non-Markovian evolutions that monotonically  contract $D_{\rm Tr}(\p,\q)$ for any two points (see Appendix~\ref{app:TRdist}).
	This corroborates the interpretation of the Fisher metric as the canonical distance whose contractivity identifies stochastic-divisible maps. 
	Still, the argument that lead to Thm.~\ref{theo:markov_iff_contract} has a shortcoming: even if non-Markovianity implies the dilation of the Fisher information, this is not sufficient to produce an operational witness. In fact, assume that at time $t$ there is a local dilation for two points close to $\tilde\p$. In order to observe it, one would need to initialise the system in the state  $\p(0)=({T^{(t,0)}})^{-1}\tilde{\p}$. If $\tilde{\p}$ is outside the image of $T^{(t,0)}$, this cannot be achieved physically. That is, the drawback of this approach is that the Fisher metric is point-dependent, and the witnessing point might be excluded by the dynamics $T^{(t,0)}$. 
	On the other hand, the trace-distance is translational invariant, i.e., $D_{\rm Tr}(\p,\q) = |\del|$  where $\del =\p-\q$. Then, as soon as any two points $\p$ and $\q$ are increasing their trace-distance, in order to present an operational witness it is sufficient to consider a point $\rr$ in the interior of the image of  $T^{(t,0)}$ and $\varepsilon$ small enough for $\rr_\varepsilon=\rr+\varepsilon(\p-\q)$ to be in the image as well. Then, $D_{\rm Tr}(\rr,\rr_\varepsilon)=\varepsilon D_{\rm Tr}(\p,\q)$ and this increases by assumption. Moreover, if one adds a finite number of ancillary degrees of freedom on which the dynamics acts trivially, one can always find such two points for any non-Markovian evolution (see \cite{bylicka2017constructive} and App~\ref{app:TRdist}). 
	A similar property does not hold for the Fisher distance, as it lacks translation invariance. More specifically, 
	
	\begin{theorem}
		\label{theo:no-go}
		No finite number $n$ of copies of the channel $T^{(t,0)}$ nor  ancillary degrees of freedom of any dimension $M$ is enough to witness all non-Markovian evolutions via revivals of the Fisher distance between two initially prepared states.
	\end{theorem}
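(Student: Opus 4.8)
The plan is to produce a single non-Markovian smooth evolution on a two-level system that no Fisher-distance revival can detect, regardless of the number of copies $n$ or the ancilla dimension $M$. The starting point is the contraction rate of Eq.~\eqref{eq:dt_fish}. For the extended dynamics $(T^{(t,0)})^{\otimes n}\otimes\eye_M$ the generator is a sum of single-copy generators (the ancilla acts trivially), so the only nonzero rates are single-copy transitions. Grouping the contraction rate by the copy $k$ undergoing the transition and by the configuration $\beta$ of all the remaining degrees of freedom, one finds that it is a sum of non-negative factors multiplying the scalars $a_{1\leftarrow 2}\,p_{(2,\beta)}+a_{2\leftarrow 1}\,p_{(1,\beta)}$. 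As in the proof of Theorem~\ref{theo:markov_iff_contract}, a dilating perturbation $\del$ exists precisely when one of these scalars is negative, i.e.\ when the conditional odds ratio $p_{(1,\beta)}/p_{(2,\beta)}$ drops below the threshold $|a_{1\leftarrow 2}|/a_{2\leftarrow 1}$ at an instant with $a_{1\leftarrow 2}<0$. Thus the whole statement reduces to controlling these conditional odds ratios over the image of the extended channel.

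The key observation — and the reason copies and ancillas are useless here, in contrast with the translation-invariant trace distance — is that conditioning on $\beta$ never manufactures an odds ratio that is not already present in the single-copy image. Writing $\p=(T^{(t,0)})^{\otimes n}\otimes\eye_M[\p_0]$ componentwise and fixing $\beta$, I would collect the sum over the transitioning index into a non-negative weight vector $w=w(\beta,\p_0)$ and read off $p_{(i,\beta)}=(T^{(t,0)}w)_i$. Hence the (normalised) conditional law of copy $k$ lies in the single-copy image of $T^{(t,0)}$, and its odds ratio is bounded below by the single-copy minimum. Consequently, if $T^{(t,0)}$ is engineered so that its single-copy image obeys $p_1/p_2\ge|a^{(t)}_{1\leftarrow 2}|/a^{(t)}_{2\leftarrow 1}$ at every instant where $a^{(t)}_{1\leftarrow 2}<0$, the same safe bound holds for the extended image, for all $n$ and $M$.

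Next I would lift this pointwise non-dilation to the finite Fisher distance. The safe region $\{\p:\,p_{(1,\beta)}\ge c\,p_{(2,\beta)}\ \forall k,\beta\}$ is cut out by linear inequalities, hence convex; passing to the coordinates $y_\alpha=\sqrt{p_\alpha}$, in which the Fisher geodesics are great-circle arcs on the sphere sector, these inequalities become $y_{(1,\beta)}\ge\sqrt{c}\,y_{(2,\beta)}$, i.e.\ a polyhedral cone inside the positive orthant. Since a great-circle arc between two points of the positive orthant is a non-negative combination of its endpoints, the safe region is geodesically convex, so the geodesic joining any two image states stays inside it. The Theorem~\ref{theo:markov_iff_contract} argument — pointwise contraction along the connecting geodesic plus the triangle inequality — then yields monotonic contraction of the finite Fisher distance, so no revival can ever occur for this channel.

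Finally I would supply the channel in two phases: a Markovian phase with rates $a_{1\leftarrow 2}=\Lambda\gg a_{2\leftarrow 1}=\mu>0$, which contracts the image exponentially onto a neighbourhood of the steady state of odds ratio $\Lambda/\mu$, followed by a short non-Markovian window in which $a_{1\leftarrow 2}=-\nu<0$ with $\nu/a_{2\leftarrow 1}\ll\Lambda/\mu$. Because the image enters the window already squeezed to odds ratio $\approx\Lambda/\mu$ and drifts only slowly over a short time, it never reaches the threshold, while the negative rate makes the evolution non-Markovian by Definition~\ref{def:Markov}; smoothing the rates keeps $T^{(t,0)}$ differentiable and invertible. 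The step I expect to be the main obstacle is exactly the conditional-odds-ratio argument of the second paragraph: one must verify carefully that marginalising and conditioning the \emph{correlated} extended states cannot produce a small odds ratio out of safe single-copy data. This is the crux that separates the point-dependent Fisher metric from the trace distance and is what ultimately forces the no-go.
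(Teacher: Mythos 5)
Your proposal is correct, and its core coincides with the paper's own proof: both exhibit a non-Markovian evolution whose image is squeezed around a point of large odds ratio, and both rest on the same pairing of rates, writing the contraction rate \eqref{eq:dt_fish} for the extended dynamics as a sum of non-negative squared factors multiplying the scalars $a_{1\leftarrow 2}\,p_{(2,\beta)}+a_{2\leftarrow 1}\,p_{(1,\beta)}$ (compare Eq.~\eqref{eqapp:dfish_teo2} and its multi-copy version in Appendix~\ref{app:theo_nogo}). Where you genuinely depart from the paper is in how copies and ancillas are disposed of. The paper takes $T^{(t,0)}[\p]=(1-\varepsilon)\ppi+\varepsilon\,\p$, so every extended image state is $\varepsilon$-close to the product $\ppi^{\otimes n}\otimes\q$, and the scalars above are controlled only up to $\ord{\varepsilon}$ corrections; your conditional-odds-ratio lemma --- fixing the configuration $\beta$ of the spectator degrees of freedom and observing that $p_{(\cdot,\beta)}=T^{(t,0)}[w]$ for a non-negative vector $w$, so that every conditional law lies \emph{exactly} in the single-copy image --- is exact, requires no perturbative bookkeeping, and applies to any channel whose single-copy image satisfies the odds bound, not just the near-depolarizing one. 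Likewise, your lifting of pointwise non-dilation to the finite distance (the safe region becomes a polyhedral cone in the $\sqrt{\p}$ coordinates, hence geodesically convex, so the great-circle geodesic between two image points never leaves the region where the metric contracts) makes explicit what the paper compresses into a footnote. The trade-off is that your concrete construction is tailored to a two-level single-copy system with a two-phase rate protocol, whereas the paper's works verbatim for any $N$ and any single negative rate; both establish the theorem, but your conditioning lemma is arguably the cleaner route through the many-copy case, which you correctly identified as the crux.
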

	Specifically, given $n$ copies of the system, and an ancilla with arbitrary dimension $M$, the state space will be $\mathcal{S}({\mathbb{R}^N}^{\otimes n}\otimes \mathbb{R}^M)$ and the dynamics acting on it $\Bar{T}^{(t,0)}={T^{(t,0)}}^{\otimes n}\otimes \mathds{1}_{M}$. 
	\begin{proof}
		We constructively provide, for any $n$ and $M$, a counterexample in which all the states in the image of $\Bar{T}^{(t,0)}$ continue decreasing their Fisher distance between time $t$ and $t+\delta t$, even if $T^{(t+\delta t,t)}\simeq\mathds{1}+\delta t R^{(t)}$ is non-stochastic. 
		Here we provide the proof for the single copy case, deferring the multiple cases one to Appendix \ref{app:theo_nogo}. The map we consider is then given by $\Bar{T}^{(t,0)}={T^{(t,0)}}\otimes \mathds{1}_{M}$
		and the rate matrix by $\Bar{R}^{(t)}=\frac{\de}{\dt}\Bar{T}^{(t,0)}=  R^{(t)}\otimes\mathds{1}_M$. 
		Suppose now that there is a unique negative rate $a_{\tilde{i}\leftarrow \tilde{j}}$ and that the image of $T^{(t,0)}$ is contained to a small ball around an appropriate vector $\ppi$ 
		(e.g., by a map  of the form $T^{(t,0)}[\p]=\ppi(1-\varepsilon)+\varepsilon \p$) \footnote{This choice of counterexample allows us again to use the local expression of the Fisher metric~\eqref{eq:fish_def} to prove global properties of the same distance.}. Attach at time $0$ an arbitrary ancilla, so that the initial state is given by $\p(0)\in\st(\RR^N\otimes\RR^M)$, and define $\bm{w}$ to be its reduced marginal on $\RR^M$, whereas the dynamics is given by $\Bar{T}^{(t,0)}={T^{(t,0)}}\otimes \mathds{1}_{M}$. Then, the state at time $t$ will be $\varepsilon$-close to 
		\begin{align}
			\p(t)\sim \bm \pi \otimes \bm w +\ord{\varepsilon} \,.
		\end{align}
		Notice also that the rate matrix $\Bar{R}^{(t)}=\frac{\de}{\dt}\Bar{T}^{(t,0)}= R^{(t)}\otimes\mathds{1}_M$ has the following coordinate expression
		\begin{align}
			[\Bar{R}]_{ij,\alpha\beta}=R_{ij}\delta_{\alpha\beta}\; \quad i,j\in\curbra{1,\dots,N}\quad \alpha,\beta\in\curbra{1,\dots,M}\; ,
		\end{align}
		so that the rates are simply given by $a_{i\alpha\leftarrow j\beta}= a_{i\leftarrow j}\delta_{\alpha\beta}$.
		In this scenario, the evolution of the Fisher distance (as expressed in Eq.~\eqref{eq:dt_fish}) becomes 
		\begin{align}
			\label{eqapp:dfish_teo2}
			-\sum_{i\neq j,\alpha}  a_{i\leftarrow j} \left( \frac{d_{i\alpha}}{p_{i\alpha}}-\frac{d_{j\alpha}}{p_{j\alpha}}\right)^2 p_{j\alpha} +\ord{\varepsilon} \;\quad \text{with } p_{j\alpha}=\pi_j w_\alpha\,.
		\end{align}
		Again, consider the case in which at time $t$ a single rate becomes negative, for definiteness say $a_{1\leftarrow 2}<0$. Then, it is sufficient that $a_{2\leftarrow 1} \pi_1>|a_{1\leftarrow 2}|\pi_2$ to see that the sum in Eq.~\eqref{eqapp:dfish_teo2} is strictly negative in the limit $\varepsilon\rightarrow 0$. Hence, even if the dynamics is non-Markovian, there is no increase in Fisher distance on the image of $T^{(t,0)}$, proving Theorem~\ref{theo:no-go} for $n=1$.
	\end{proof}
	
	In the same way, even using multiple copies of the channel does not help finding a witness. The proof easily generalises to $n\geq 2$, as presented in App.~\ref{app:theo_nogo}.
	It should be noticed that the condition of finite copies in Thm.~\ref{theo:no-go} cannot be dropped: in fact, in the limit of infinite copies, one can perform full tomography of the evolution, allowing to reconstruct  its action also on points outside of the image of $T^{(t,0)}$. 
	
	Despite the above "no-go" Theorem, we can introduce an operational non-Markovianity witness which does not require additional copies of the channel, but only some specific post-processing of the states. More specifically, the following technical theorem holds
	\begin{theorem}
		\label{theo:witness}
		For any state $\p$ and perturbation $\del$ on $\st(\mathbb{R}^{N}\otimes\mathbb{R}^{M})$ (where we admit an $M$-dimensional ancilla), 
		it is possible to implement a class of transformations $F_{\del}$ depending on $\del$ and on $T^{(t,0)}$ that witness non-Markovianity at time~$t$. That is, if $T^{(t+\delta t,t)}$ is stochastic, for any choice of  $\del$, one has that
		\begin{multline}
           D_{\rm Fish}(F_{\del} \circ T^{(t+\delta t,0)}[\p], F_{\del} \circ T^{(t+\delta t,0)}[\p+\del]) 
            \leq D_{\rm Fish}(F_{\del} \circ T^{(t,0)}[\p], F_{\del} \circ T^{(t,0)}[\p+\del])
        \end{multline}
		whereas in the presence of non-Markovianity (i.e., for $T^{(t+\delta t,t)}$ not stochastic) there exists at least one $\del$ for which the inequality is reversed (i.e., the Fisher distance of the post-processed states increases).
		Moreover, for classical systems $M=2$ is enough to witness in this way all non-Markovian evolutions.
	\end{theorem}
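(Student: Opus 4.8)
The plan is to reduce the whole statement to a local computation, exactly as in the proofs of Theorems~\ref{theo:markov_iff_contract} and~\ref{theo:no-go}: since $F_{\del}$ is linear and $D_{\rm Fish}$ is a Riemannian metric, it suffices to control the sign of the derivative of the \emph{post-processed} squared distance over the step $[t,t+\delta t]$ and then invoke the triangle inequality to lift local contraction/dilation to the finite distance. Writing $\q=T^{(t,0)}[\p]$, $\del_t=T^{(t,0)}[\del]$ (both carrying the trivial action on the $M$-dimensional ancilla), $\bm s=F_{\del}[\q]$ and $\bm\sigma=F_{\del}[\del_t]$, and differentiating $D^2_{\rm Fish}\!\big(F_{\del}(\eye+\delta t\,R\otimes\id_M)\q,\,\cdots\big)$ at $\delta t=0$ produces a functional that is \emph{linear} in the generator,
\begin{align}
\label{eq:plan_L}
L[R]=\sum_m \frac{\sigma_m\,[F_{\del}(R\otimes\id_M)\del_t]_m}{s_m}-\sum_m\frac{\sigma_m^2\,[F_{\del}(R\otimes\id_M)\q]_m}{2\,s_m^2}\,.
\end{align}
The sign of $L[R]$ is the sign of $\tfrac{\de}{\dt}D^2_{\rm Fish}$ of the post-processed states, so the entire theorem amounts to controlling $L$. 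By linearity I would decompose $L$ over the elementary Markovian generators $\ketbra{i}{j}-\ketbra{j}{j}$ (dressed with $\id_M$), obtaining coefficients $L_{i\leftarrow j}$ that depend only on the two relevant columns of $F_{\del}$, on $\bm s$ and on $\bm\sigma$, and that carry the same $I_{i\leftarrow j}$-type structure as Eq.~\eqref{def:fish_flow}, but now evaluated at the \emph{reshaped} output rather than at the image point $\q$.

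For the Markovian implication I would insist that $F_{\del}$ be a genuine stochastic map, i.e.\ a physical channel. The cleanest sufficient condition would be a commutation of the form $F_{\del}\circ(T^{(t+\delta t,t)}\otimes\id_M)=S\circ F_{\del}$ with $S$ stochastic, so that Chentsov's monotonicity gives contraction immediately; since such a commutation cannot hold for all stochastic $T^{(t+\delta t,t)}$ once $M\ge 2$ (conjugation by a stochastic map does not preserve the Markovian cone in dimension $\ge 3$), I would instead establish the weaker pointwise condition $L_{i\leftarrow j}\le 0$ for \emph{every} pair $(i,j)$. This is precisely the statement that an honest post-processing cannot manufacture a revival: if all coefficients are non-positive then $L[R]=\sum_{i\ne j}a_{i\leftarrow j}L_{i\leftarrow j}\le 0$ for every $R$ in the Markovian cone (all $a_{i\leftarrow j}\ge 0$) and for any $\del$, which is the required contraction.

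For the non-Markovian implication I would, given a negative rate $a_{\tilde i\leftarrow\tilde j}<0$, choose $\del$ so as to excite essentially only the $\tilde i\leftarrow\tilde j$ channel, and design the matching $F_{\del}$ so that $|L_{\tilde i\leftarrow\tilde j}|$ strictly dominates all the competing coefficients. Then $L[R]\simeq a_{\tilde i\leftarrow\tilde j}L_{\tilde i\leftarrow\tilde j}>0$, a dilation, even though the accessible image is the small ball around $\ppi$ used in the no-go construction. Here lies the role of the qubit ancilla and the reason $M=2$ is enough: Theorem~\ref{theo:no-go} failed because the witnessing point is forced outside the image of $T^{(t,0)}$ and, at the accessible point, a competing positive rate dominates the Fisher flow. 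One extra classical bit supplies exactly the missing degree of freedom — used as a pointer, it lets $F_{\del}$ implement a \emph{conditional} reshaping that flattens the reference distribution and aligns the perturbation with the single transition $\tilde i\leftarrow\tilde j$, thereby emulating the translation invariance that makes the trace distance an effective witness.

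The hard part, and the step I expect to be the main obstacle, is proving that these two demands are simultaneously satisfiable by one physical (stochastic) $F_{\del}$ acting on $\st(\RR^N\otimes\RR^2)$: the map must keep \emph{all} $L_{i\leftarrow j}\le 0$ (no false positive under Markovian dynamics, for every $\del$) while making $L_{\tilde i\leftarrow\tilde j}$ strictly dominant (sharp detection of the lone negative rate), and it must do so with a single ancillary bit rather than a larger ancilla. Because the Fisher metric is point-dependent, post-processing deforms it, and the delicate content is to show the deformation never turns a true contraction into an apparent dilation yet remains sensitive enough; I would verify this by exhibiting an explicit conditional channel and bounding the competing coefficients against the target one in the limit $\varepsilon\to0$. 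Once the classical construction is in place, the quantum (CP-divisibility) version follows from the trace/Fisher correspondence of Appendix~\ref{app:fisherAndTrace}, as for the previous theorems.
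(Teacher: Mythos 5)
Your plan correctly identifies the two constraints the post-processing must satisfy (no false positives under any Markovian generator, strict sensitivity to the lone negative rate), but it stops exactly where the actual proof has to start: you never exhibit $F_{\del}$, and you yourself flag its construction as ``the main obstacle''. That construction is the content of the theorem, so as it stands the argument has a genuine gap. The paper closes it with two ingredients absent from your proposal. First, Lemma~\ref{lem:special_points}: for any perturbation $\delta\rho$ there is a special base-point $\rho_{\delta\rho}=|\delta\rho|/\Tr{|\delta\rho|}$ at which the squared Fisher distance equals $\tfrac{1}{2}\Tr{|\delta\rho|}^2$ and — crucially, via the cancellation in Eqs.~\eqref{eq:C27}--\eqref{eq:C28} — its \emph{time derivative} also coincides with that of the squared trace distance, because the contribution of the base-point motion $\dot\rho_{\delta\rho}$ vanishes there. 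Second, the filter is simply the near-complete erasure $F^{(\varepsilon)}_{\rho_{\delta\rho}}[\sigma]=(1-\varepsilon)\rho_{\delta\rho}+\varepsilon\,\sigma$, manifestly stochastic, which rescales differences by $\varepsilon$ and relocates every state into a neighbourhood of the frozen point $\rho_{\delta\rho}$; by Eq.~\eqref{eq:C31} the post-processed squared Fisher distance is then $\varepsilon^2\,|\delta\rho(t)|^2_{\rm Tr}$ to leading order. Contraction under Markovian dynamics is thus inherited for free from contractivity of the trace norm — there is no need to verify your pointwise conditions $L_{i\leftarrow j}\leq 0$ one by one — and the reversed inequality in the non-Markovian case, together with the claim that $M=2$ suffices classically, reduces to the already-established trace-distance witness of Lemma~\ref{lem:ancilla}.

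A second, more specific discrepancy: the role you assign to the ancilla is not the one it plays. You propose using the extra bit as a ``pointer'' enabling a conditional reshaping inside $F_{\del}$; in the paper the ancilla enters only upstream, to guarantee (Appendix~\ref{app:TRdist}) that a traceless vector with growing trace norm exists at all — the filter itself is an unconditional depolarisation toward a single fixed state. Your worry that the two demands on $F_{\del}$ might not be simultaneously satisfiable dissolves once the filter is of erasure type, precisely because the output metric then reduces to the Fisher metric at one frozen point, where Lemma~\ref{lem:special_points} says it \emph{is} the trace distance. Without that identification (or an equivalent explicit channel with verified coefficient bounds), the ``conditional reshaping'' you describe remains a conjecture rather than a proof.
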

	The specific proof is given in Appendix~\ref{app:teo_witness}. Theorem~\ref{theo:witness} ensures that any break of stochastic-divisibility in the interval $[t,t+\delta t]$ can be operationally witnessed via backflow of Fisher information between states that undergo the transformation $F_{\del}$ before being measured. The specific construction that we used here requires the knowledge of the previous dynamics $T^{(t,0)}$, which makes such witnessing unpractical. Still, this protocol should be considered as a proof of principle of the possibility of witnessing non-Markovianity through post-processing.

	\paragraph{A case study.}
	
Theorem~\ref{theo:markov_iff_contract} tells us that any non-Markovian evolution induces the dilation of the Fisher distance somewhere in the space of states. On the other hand, as it was discussed above, this is not the case for another important information quantifier, namely the trace distance~\eqref{eq:D_tr_def}. In order to exemplify this behaviour, we present here a family of non-Markovian evolutions that cannot be witnessed on the space of states by the trace distance.
 In particular, consider the dynamics represented as
 \begin{align}
 \label{eq:example_dynamics}
 \p(t)=(1-s(t))\p(0)+s(t)\m(t) \;,
 \end{align}
 where $0\leq s(t)\leq 1$ is a smooth mixing parameter satisfying  $s(0)=0$, and $\dot{s}(t)\geq 0$, while $\m(t)$ is a state that can vary in time. The master equation resulting from~\eqref{eq:example_dynamics} is (with time-dependence implicit in the notation)
 \begin{align}
     \dot{\p}= - \frac{\dot{s}}{1-s}(\p-\m)+s\dot{\m}\;.
 \end{align}
 This kind of dynamics is explicitly Markovian whenever the state $\m$ is fixed. However if $\dot{\m}\neq 0$ the evolution is not stochastic-divisible (i.e. non-Markovian) in general. In fact, one can express the rates $a_{i\leftarrow j}^{(t)}$ defined in~\eqref{eq:canonical_basis} by simple algebraic manipulations as
 $
 a_{i\leftarrow j}=\frac{\dot{s}}{1-s}m_i+s\dot{m}_i\;,
 $
 which can be negative due to the second term.

If we now look at the trace distance, though, this is always decreasing, despite the non-Markovianity. In fact, it is clear from~\eqref{eq:example_dynamics} that
 \begin{align}
     |\p(t) - \q(t)|=(1-s(t))|\p(0)-\q(0)|
 \end{align}
 which monotonically decreases, as we assumed that $\dot{s}\geq 0$.
 On the other hand, the Fisher distance can indeed detect non-Markovianity, as follows from Theorem~\ref{theo:markov_iff_contract}. This difference in behaviour is presented in Fig.~\ref{fig:TrvsFishNew}, where we plot the evolution of both trace distance and Fisher distance between two infinitesimally-close states of a three level system, whose dynamics is a specific example of~\eqref{eq:example_dynamics}:
\begin{align}
\label{eq:example_specific}
    \p(t) = e^{-t} \p(0) + \left( \frac{1-e^{-t}}{2} \right) \left( (1+\cos(10 t))\bm{v}_1 + (1-\cos(10 t))\bm{v}_2 \right) \,,
\end{align}
where $\bm{v}_1 := \{\frac{1}{3},\frac{1}{3},\frac{1}{3}\}$ and $\bm{v}_2 =\{1,0,0\}$~\footnote{One can interpret the dynamics~\eqref{eq:example_specific} as a time dependent multi-level amplitude
damping channel~\cite{chessa2021quantum,lonigro2022quantum}, with a thermal fixed point oscillating between high temperature ($\bm{v}_1$) and low temperature ($\bm{v}_2$).}. Thanks to the fact that the tangent space of $\st(\mathbb{R}^3)$ is two-dimensional,  we can plot all of it (up to normalization), showing how there is no direction for which the trace distance is increasing. The Fisher distance, instead, which is plotted on the right side of the figure, shows a sequence of periodic backflows in time, as one could expect from the equation of motion.

\begin{figure}
	\centering
	\includegraphics[width=1.\linewidth]{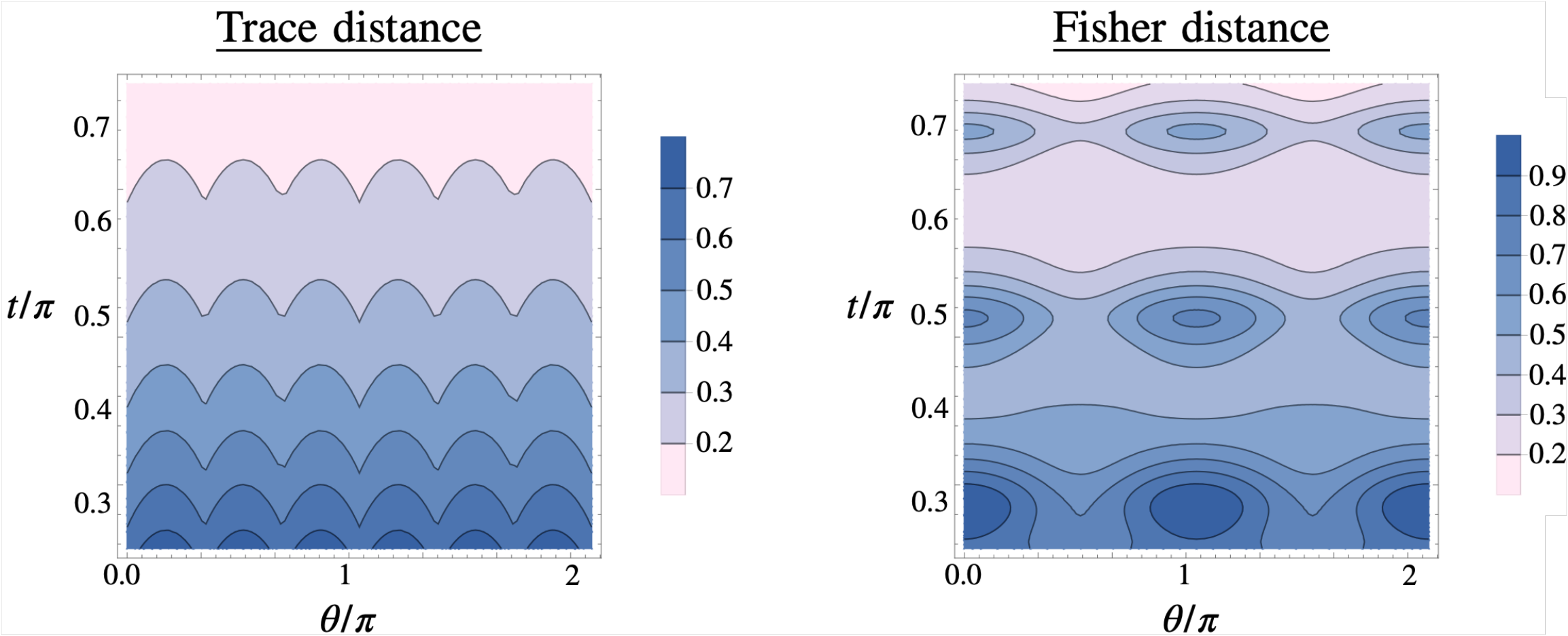}
	\caption{
    Evolution of the trace distance and of the Fisher distance between the points $\p(t)$ and $\p(t) + \varepsilon\del_\theta(t)$, specified by the dynamics~\eqref{eq:example_specific}, $\p(0) =\{\frac{1}{5},\frac{2}{5},\frac{2}{5}\}$ and $\del_\theta(0) := \frac{\cos\theta}{\sqrt{2}}\,\{0,1,-1\}+\frac{\sin\theta}{\sqrt{6}}\{2,-1,-1\} $, which covers all possible directions in the tangent space. The time and the parameter $\theta$ are given in units of $\pi$, 
    whereas the scale for the contour plot is given in units of $\varepsilon$. As it can be noticed, the trace distance does not witness the backflow (notice that since it is translational invariant, this continues to hold on the whole phase space). The Fisher distance, on the other hand, shows a non-monotonic behaviour in time.}
	\label{fig:TrvsFishNew}
 \end{figure}

	
\section{Backflows of information from Bayesian retrodiction}
	
	 In most of the literature about non-Markovianity backflows of information are considered by studying states $\p(t)$ at time $t$~\cite{rivas2014quantum,breuer2016colloquium}, while the question about backflows of information \emph{about the initial state} $\p(0)$ remains instead largely unexplored. Even if one can argue that the invertibility of the dynamics preserves the information about the initial conditions (as these can be recovered using $\p(0)={T^{(t,0)}}^{-1}[\p(t)]$), actually retrieving the initial state from $\p(t)$ requires full tomography both of the state and of the channel, and a post-processing of such data which cannot be performed physically in a single-shot scenario.
	
	Conversely, we consider here a Bayesian inversion of $\p(t)$ that allows us to compare it with the initial state through a physically implementable transformation. In particular, define the \emph{prior} as a vector $\ppi$  representing our knowledge about the system at time $0$. Given an evolution $T^{(t,0)}$, the Bayes-recovery map is defined as
	\begin{align}
		\hat{T}_t = J_{\ppi} \circ T_t^\intercal \circ J^{-1}_{T[\ppi]}\label{eq:reverseBayes}
	\end{align}
	where we use the shorthand notation $T_t:=T^{(t,0)}$, and we introduced the map $J_{\p}$, corresponding to the diagonal operator that multiplies each component of a vector by the corresponding component of $\p$, i.e. $[J_{\p}]_{ij}=\delta_{ij}p_j$. 
	
	The map $\hat{T}_t$ is stochastic (i.e., physically implementable) and represents a recovery of the state via statistical retrodiction \cite{BS2021,SS2022}. Indeed if one identifies the coordinates of the prior with the corresponding probability, i.e., $\pi_i\equiv P(i,0)$, and the components of the maps with the corresponding transitions, i.e., $(T_t)_{ij}\equiv P(i,t|j,0)$, it is easy to verify that $(\hat{T}_t)_{ij}\equiv P(i,0|j,t)$ satisfies the Bayes rule. It should also be noticed that $\hat{T}_t$ perfectly recovers the prior at all times  ($\ppi=\hat{T}_t\circ T_t[\ppi]$).
	
	This channel allows us to study how much information is stored in the evolved state $\p(t)$ about its initial conditions. In particular, we can compare $\p(0)$ and the retrodicted state $\hat{\p}(t) := \hat{T}_t[\p(t)]=\hat{T}_t\circ T_t[\p(0)]$. We also assume that the prior contains some knowledge on the initial conditions, so that we can write $\p(0)=\ppi+\del$, for some small $|\del|\ll 1$. Moreover, since $\ppi$ is perfectly recovered, we also have that $\hat{\p}(t) = \ppi +\hat\del(t)$, and $\p(0)-\hat{\p}(t)=\del-\hat{\del}(t)$ is also infinitesimal. Then, the Fisher information~\eqref{eq:fish_def} between $\p(0)$ and $\hat\p(t)$ reads, up to order $\mathcal{O}(|\del|^2)$,
	\begin{align}\label{eq:FishRetro}
	D^2_{\rm Fish}(\p(0),\hat{\p}(t))
		\simeq
		\left\langle\del-\hat{\del}(t), \del-\hat{\del}(t) \right\rangle_{\p(0)}
		\simeq \left\langle\del-\hat{\del}(t), \del-\hat{\del}(t) \right\rangle_{\ppi} \,.
	\end{align}
 It is straightforward to verify that the Fisher scalar product~\eqref{eq:fish_prod_def} between $\p(0)$ and $\hat{\p}(t)$, pointed on the prior $\ppi$, reads
 \begin{align}\label{eq:scalar_retro}
   \left\langle \p(0),\hat{\p}(t)\right\rangle_{\ppi}=\frac{1}{2}+   \left\langle \del,\hat{\del}(t)\right\rangle_{\ppi}\;.
 \end{align}
 The distance~\eqref{eq:FishRetro} can be directly interpreted as a measure of error in retrodiction, when being close to the prior $\ppi$. To corroborate this interpretation, 
 we notice in fact the following Cauchy-Schwarz inequality, valid for any observable $\bm{O}$
 \begin{align}
     \left(\sum_i {O_i (d_i-\hat{d}_i)}\right)^2
     \leq \left(\sum_i O_i^2 \pi_i\right) \left(\sum_i \pi_i^{-1}(d_i-\hat{d}_i)^2\right)
     \label{eq:CS_square_error}
 \end{align}
Notice that the left-hand side of the above equation represents the average square error produced on when mistaking $\p(0)$ with $\hat{\p}(t)$, while $\sum_i O_i^2 \pi_i$ is the average value of $\bm{O}^2$ on $\ppi$, thus representing the natural norm for observables when in the vicinity of the prior. The inequality~\eqref{eq:CS_square_error} can then be seen as an upperbound to the normalized mean square error of any observable $\epsilon^2_{\ppi}(\bm{O},\p,\hat{\p}):= \left(\langle \bm{O}\rangle_{\p} - \langle\bm{O} \rangle_{\hat{\p}}\right)^2 / \langle \bm{O}^2\rangle_{\ppi}$, that is
\begin{align}
    \epsilon^2_{\ppi}(\bm{O},\p,\hat{\p}) \leq 2 D^2_{\rm Fish}(\p(0),\hat{\p}(t))\;.
\end{align}
The scalar product~\eqref{eq:scalar_retro} can provide an immediate bound to $D^2_{\rm Fish}(\p(0),\hat{\p}(t))$ as
 $\left\langle\del-\hat{\del}, \del-\hat{\del} \right\rangle_{\ppi} \allowbreak = \langle \del,\del\rangle_{\ppi} + \langle \hat{\del},\hat{\del}\rangle_{\ppi} - 2\langle \del,\hat{\del}\rangle_{\ppi}$. Choosing without loss of generality to normalize the perturbation $\langle \del,\del\rangle_{\ppi}=1$, one has
 \begin{align}
     D^2_{\rm Fish}(\p(0),\hat{\p}(t))\leq 2(1-\langle \del,\hat{\del}\rangle_{\ppi})
 \end{align}
\\ 
Interestingly, the behaviour of these objects is directly connected to the Fisher information at time $t$:

	\begin{theorem}
		\label{theo:recovery_fisher}
		The evolution of the Fisher scalar product between the  original $\p(0)=\ppi+\del$ and the retrodicted $\hat{\p}(t) = \ppi +\hat\del(t)$ is in one-to-one correspondence with the contractivity of the Fisher information 
        at time $t$\;,
  \begin{align}
      \left\langle \p(0),\hat{\p}(t)\right\rangle_{\ppi}=\frac{1}{2}+\langle T_t[\del],T_t[\del]\rangle_{T_t[\ppi]}\;.
  \end{align}
        Moreover, whenever the latter expands, it exists a perturbation $\tilde{\del}$ for which the Fisher distance $D_{\rm Fish}(\p(0),\hat{\p}(t))$~\eqref{eq:FishRetro} decreases. That is,
    \begin{align}
    \label{eq:dt_backflow}
        \exists \del \;\text{s.t.}\; \frac{\de}{\dt}\langle T_t[\del],T_t[\del]\rangle_{T_t[\ppi]}>0
        \quad \Rightarrow\quad
        \exists\tilde{\del}\;\text{s.t.}\; \frac{\de}{\dt} D_{\rm Fish}^2 (\ppi+\tilde{\del},\hat{T}_tT_t[\ppi+\tilde{\del}])<0\;. 
    \end{align}


	\end{theorem}
	\begin{proof}
		The main ingredient in the proof of this theorem is given by the following identities
		\begin{align}
  \label{eq:identities}
			\langle\del,\hat{T}_tT_t[\del]\rangle_{\ppi}=\langle T_t[\del],T_t[\del]\rangle_{T_t[\ppi]} = \langle\hat{T}_tT_t[\del],\del\rangle_{\ppi},
		\end{align}
		which can verified by directly substituting $\hat T_t$ with its definition in Eq.~\eqref{eq:reverseBayes}, and proves the first statement of Theorem~\ref{theo:recovery_fisher}, by direct comparison of Eq.~\eqref{eq:identities} with~\eqref{eq:scalar_retro}. One can read from these equalities the following two facts: first, $\hat T_t$ can be used to put in relation the Fisher information at time $0$ and at time $t$; secondly, $\hat T_tT_t$ is self-adjoint with respect to $\left\langle\bullet,\bullet\right\rangle_{\ppi}$. 
        This allows to rewrite Eq.~\eqref{eq:dt_backflow} as
		\begin{align}
			\frac{\de}{\dt}\left\langle \del,(\mathds{1}-\hat{T}_tT_t)^2[\del]\right\rangle_{\ppi}=
			-2 \left\langle \del,(\mathds{1}-\hat{T}_tT_t)\frac{\de}{\dt}\hat{T}_tT_t[\del]\right\rangle_{\ppi}.
		\end{align}
		First notice that $(\mathds{1}-\hat{T}_tT_t)$ is positive definite. In fact, this can be seen from  \begin{align}
			\langle\del,\hat{T}_tT_t[\del]\rangle_{\ppi} = \langle T_t[\del],T_t[\del]\rangle_{T_t[\ppi]}\leq \langle\del,\del\rangle_{\ppi}\;,
		\end{align}
		where the last inequality follows from the contractivity of the Fisher information. Moreover, we also have that $\langle\del,\frac{\de}{\dt}\hat{T}_tT_t[\del]\rangle_{\ppi}=\frac{\de}{\dt}\langle T_t[\del],T_t[\del]\rangle_{T_t[\ppi]}$, so that $-\frac{\de}{\dt}\hat{T}_tT_t$ is positive if and only if the Fisher contracts monotonically. On the other hand, if the Fisher metric is expanding at time $t$, there exists an eigenvector $\tilde{\del}$ of $-\frac{\de}{\dt}\hat{T}_tT_t$ with negative eigenvalue $\lambda<0$, so that
		\begin{align}
			-2 \left\langle \tilde{\del},(\mathds{1}-\hat{T}T)\frac{\de}{\dt}\hat{T}T[\tilde{\del}]\right\rangle_{\ppi}=
			2\lambda \left\langle \tilde{\del},(\mathds{1}-\hat{T}T)[\tilde{\del}]\right\rangle_{\ppi} < 0,
		\end{align}
		proving the second claim. 
	\end{proof}
	
	This theorem tells us that the ability of an agent of retrieving the initial state of the dynamics decreases under Markovian evolution, specifically in terms of the scalar product $\left\langle \p(0),\hat{\p}(t)\right\rangle_{\ppi}$, which behaves as the Fisher information in time. Moreover, in the case in which there are backflows in the Fisher information, non-Markovianity enables some retrodicted states to be more informative by getting closer to the original $\p (0)$.
	
	We notice that, contrary to the Theorems presented in Sec.~\ref{sec:theos123}, Theorem~\ref{theo:recovery_fisher} holds only locally, i.e., assuming that the difference $\del$ between prior and the state at time $t=0$ is small.
	Moreover, the generalization of Theorem~\ref{theo:recovery_fisher} to the quantum scenario requires a technical analysis of quantum Bayes retrodiction~\cite{SS2022}, and will be presented in a forthcoming work~\cite{scandi2023quantum}.

	\begin{figure}
	    \centering
	    \includegraphics[width=0.8\textwidth]{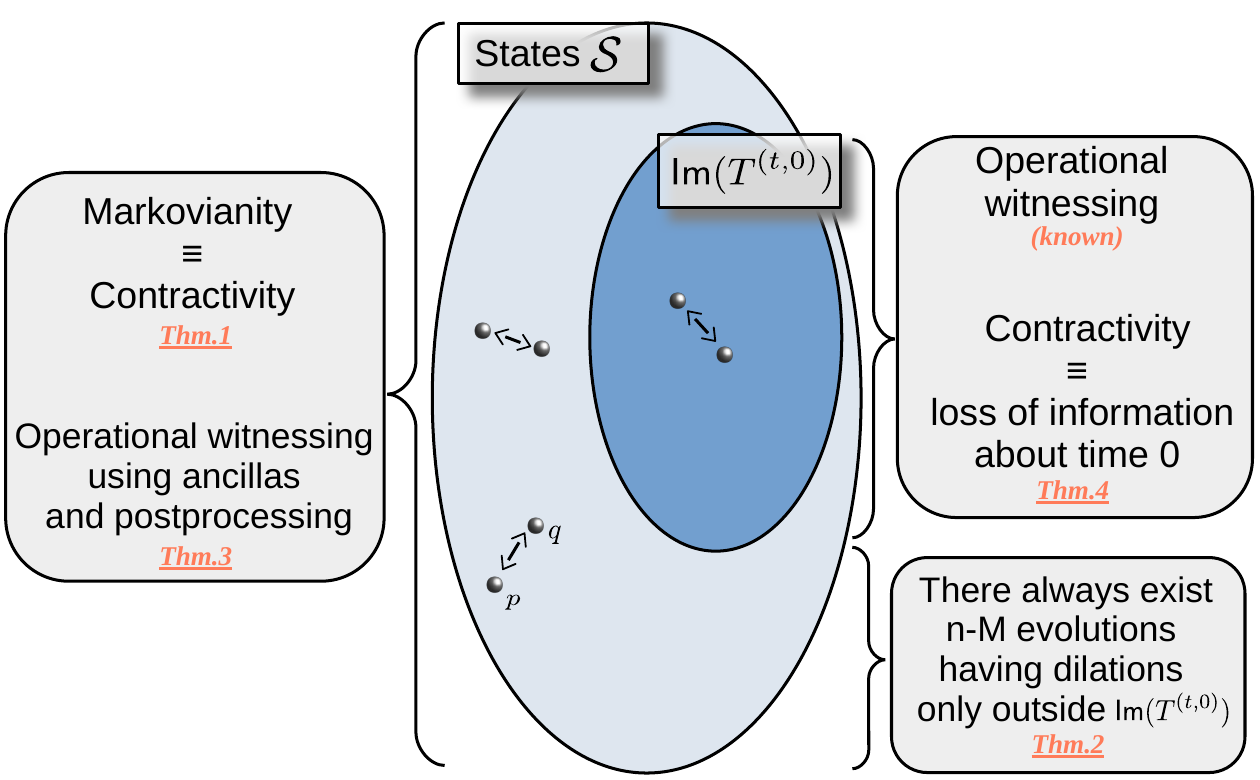}
	    \caption{The main quantity analyzed in this work is the rate of contraction/dilation of the Fisher metric, i.e., $\frac{\de}{\dt}D_{\rm Fish}(\p,\q)$, with $|\p-\q|\ll 1$, when both $\p$ and $\q$ evolve according to the local intermediate map $T^{(t+\delta t,t)}$. We characterize the mathematical and operational meaning of the negativity/positivity of such rate, both inside and outside the image of the evolution $T^{(t,0)}$.}
	    \label{fig:fullpicture}
	\end{figure}

\section{Conclusions} 
	
	In this work we characterized the relation between Markovianity, Fisher metric contractivity and information flow, both from the mathematical and operational point of view. 
	The resulting picture can be seen in Fig.~\ref{fig:fullpicture}: 
	we showed that monotonous contractivity of the Fisher metric on the whole set of states and at all times, is mathematically equivalent to Markovianity (Thm.~\ref{theo:markov_iff_contract}). As known, when the metric dilates locally inside the image of the evolution $\textsf{Im}(T^{(t,0)})$, a backflow of Fisher information can be operationally witnessed. At the same time, non-Markovian evolutions might in general show Fisher metric dilations only outside the image of the evolution itself, regardless of the number of copies of the channel and ancillary degrees of freedom available (Thm.~\ref{theo:no-go}). To witness operationally non-Markovianity in such cases, one needs post-processing to be appended to the dynamics (Thm.~\ref{theo:witness}).
	Finally, we showed that dilations of the Fisher metric between evolving states can be mapped to a backflow of information about the initial states by applying Bayesian retrodiction (Thm.~\ref{theo:recovery_fisher}).
	Our work therefore corroborates the idea that the Fisher Information defines a natural metric on the space of states, as its contractivity properties characterize memory effects in open system dynamics, both from the mathematical and operational point of view.


	\subsection{The case of quantum dynamics}

	The results were presented for classical dynamics in order to keep the exposition clean. In fact, whereas the Fisher information distance is uniquely defined for classical systems, the presence of non commutative observables allows only for the definition of a family of Fisher information in the quantum case (see Appendix~\ref{app:fisherAndTrace}). Despite this technical difficulty, we also show in~Appendix~\ref{app:fisherAndTrace} that in order to study the interplay between Markovianity and Fisher information we can limit ourselves to the diagonal subspaces, practically going back to the classical constructions (details for each case are given in Appendix~\ref{app:proofs}). In this way, Theorems~\ref{theo:markov_iff_contract}, \ref{theo:no-go} and~\ref{theo:witness} can be seen to hold quantum scenario. The only case in which we cannot ignore the non-uniqueness of the quantum Fisher information are the expression of the Fisher information flows Eq.~\eqref{def:fish_flow}, and Thm.~\ref{theo:recovery_fisher}. Both of these results will be generalised to the quantum regime in the forthcoming work~\cite{scandi2023quantum}.

	Moreover, it should be noticed that for quantum dynamics we identify Markovianity with CP-divisibility~\cite{rivas2014quantum}.
    This enforces an important difference with the classical case: whereas for the classical dynamics the minimal space that can be used to probe the Markovianity of an $N$-dimensional system is given by $\st(\mathbb{R}^N)$, the corresponding for quantum systems is given by two copies of the same Hilbert space on which the dynamics acts as $\mathcal{T}^{(t,0)}\otimes\eye_N$. In fact, it is not possible to distinguish P-divisibility from CP-divisibility by restricting to the system space alone without the use of any ancillas.




\paragraph{Funding information}
	The authors acknowledge support by the Government of Spain (FIS2020-TRANQI and Severo Ochoa CEX2019-000910-S), Fundacio Cellex, Fundacio Mir-Puig, Generalitat de Catalunya (CERCA, AGAUR SGR 1381).
	PA is supported by “la Caixa” Foundation (ID 100010434, Grant No. LCF/BQ/DI19/11730023).
	MS is supported by European Union’s Horizon 2020 research and innovation programme under the Marie Skłodowska-Curie Grant No. 713729.
	DDS is supported by ERC AdG CERQUTE.

\begin{appendix}

	\section{The trace distance: contractivity and witnesses}
	\label{app:TRdist}
	The trace distance between two classical states is given by $\left|\p - \q\right|_{\rm Tr}\equiv\sum |p_i-q_i|$. Interestingly, it only depends on the vector $\del:=\q-\p$ and not on the base-point $\p$. Distances satisfying this condition are called \emph{translational  invariant}. Then, we can write its evolution as:
	\begin{align}
		\frac{\de}{\dt}\, D_{\rm Tr}(\p,\q) = \frac{\de}{\dt}\,\left|\del \right|_{\rm Tr}=\sum_i   \frac{\de}{\dt} |{d_i}| = \sum_i  {\rm sign}(d_i) \dot{d_i}\,.\label{eq:C1}
	\end{align}
	For classical systems, given a rate matrix decomposed as in Eq.~\eqref{eq:canonical_basis}, one has that 
	\begin{align}
		\dot{d}_i= \sum_{j} \,R_{i,j}\, d_j= \sum_{\substack{j\\j\neq i}}\left(a_{i\leftarrow j}d_j - a_{j\leftarrow i}d_i\right)	\,,
	\end{align}
	so that we can rewrite Eq.~\eqref{eq:C1} as
	\begin{align}
		&\sum_{i,j, i\neq j} {\rm sign}(d_i)\left(a_{i\leftarrow j}d_j - a_{j\leftarrow i}d_i\right)
		=\sum_{i,j, i\neq j} ({\rm sign}(d_j) a_{j\leftarrow i}d_i -  {\rm sign}(d_i) a_{j\leftarrow i}d_i)=\\
		=&\sum_{i,j, i\neq j} \left({\rm sign}(d_j)- {\rm sign}(d_i)\right) a_{j\leftarrow i}d_i\;\leq 0\;,\label{eqapp:trace_deriv}
	\end{align}
	where the last expression is clearly negative whenever $a_{j\leftarrow i}$ are positive: in fact, either $d_j$ and $d_i$ have equal sign, in which case the term is null, or ${\rm sign}(d_j)= -{\rm sign}(d_i)$, so that the factor $({\rm sign}(d_j)- {\rm sign}(d_i))=-2\,{\rm sign}(d_i)$, and we can rewrite the terms in the sum as $-2\,{\rm sign}(d_i)d_i=-2|d_i|$\;. 
	
	This shows explicitly how the trace distance decreases under Markovian maps. Is the reverse true? That is, if the evolution stops being stochastic-divisible, i.e., some $a_{\tilde i\leftarrow\tilde j}$ is negative, can one always find two points for which the trace distance between them increases? Differently from the Fisher distance, the answer to this question is negative:  one easy counterexample can be given in dimension $N=2$, with $a_{1\leftarrow 2}<0$ and $a_{2\leftarrow 1}>0$ with $|a_{2\leftarrow 1}|>|a_{1\leftarrow 2}|$. Plugging such choice in Eq.~\eqref{eqapp:trace_deriv}, and using the fact that in dimension $2$, $\del$ satisfies $d_1=-d_2$, it is easy to check that the derivative of $|\del|_{\rm Tr}$ stays negative.
	
	This fact could sound counter-intuitive: it is well-known that a map is stochastic \emph{if and only if} each vector $\bm{v}$ decreases in trace-norm~\cite{rivas2014quantum}. Yet, we have proven here that the trace distance cannot witness all non-Markovian evolutions, without resorting to ancillas. An easy resolution of this paradox is given by noticing that the test vectors used here are all in the form $\del=\p-\q$, constraining them to be of zero trace. This lowers the dimension of the tested vectors by one, resolving the contradiction. It is in fact straightforward to verify that simply by choosing $v_i=\delta_{i,2}$ in our example above one would be able to witness non-Markovianity (where $\delta_{i,j}$ is the Kronecker delta). Still, the trace condition prevents us from accessing this vector.
	
	If one allows for the use of ancillas, though, the situation changes. In fact, consider an ancillary set  $\st(\mathbb{R}^M)$ on which the dynamics acts trivially (i.e., states belong to $\st(\mathbb{R}^N\otimes\mathbb{R}^M)$ and the dynamics is of the form $\Bar{T}^{(t,0)}={T^{(t,0)}}\otimes \mathds{1}_{M}$).
	It is now sufficient to consider the following vector on the extended space
	\begin{align}
		\del=\bm{v}\otimes\del_{\rm anc}\;, \quad
		\bm{v}\in\RR^{N},\ v_i=\delta_{i,2}\;, \quad \del_{\rm anc}\in\RR^M\;,\quad \Tr{\del_{\rm anc}}=0,
	\end{align}
	to witness the non-Markovianity in the example above. Notice that the trace condition on $\del_{\rm anc}$ ensures that $\del$ is traceless, so this is a valid distance vector. Then, since its derivative takes the form $\dot{\del}=\dot{\bm v}\otimes\del_{\rm anc}$, the trace distance increases in time
	\begin{align}
		\frac{\de}{\dt}{|\del|}=|\dot{\bm v}\otimes\del_{\rm anc}|=|\dot{\bm v}||\del_{\rm anc}|>0\;,
	\end{align}
	as $\bm v$ was chosen to have $|\dot{\bm v}|>0$ when $a_{1\leftarrow 2}<0$. 
	Notice that the ancilla can have dimension as small as $2$. The same kind of reasoning is used in Ref.~\cite{bylicka2017constructive} for the case of quantum dynamics. From these simple considerations and from reference~\cite{bylicka2017constructive} we have the following
	\begin{lemma}
		\label{lem:ancilla}
		Given a non-stochastic-divisible dynamics $T^{(t,0)}$, adding a finite ancilla and considering the dynamics $T^{(t,0)}\otimes\mathds{1}_M$ allows for witnessing any such dynamics via revivals in trace distance between initially prepared states. A finite number $M$ of ancillary degrees of freedom is enough, and in the classical case $M=2$.
	\end{lemma}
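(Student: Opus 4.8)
The plan is to reduce the claim to the elementary characterisation of stochastic maps as $\ell^1$-contractions, recalled above, and then to exploit the tensor structure of $\Bar{T}^{(t,0)}=T^{(t,0)}\otimes\eye_M$ to convert a generic witnessing vector into an admissible distance vector.

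First I would make explicit which vector witnesses the loss of stochasticity. Assume, without loss of generality, that the first break of stochastic-divisibility occurs in $[t,t+\delta t]$, with a negative rate $a_{\tilde i\leftarrow\tilde j}^{(t)}<0$. Setting $S:=T^{(t+\delta t,t)}\simeq\id+\delta t\,R^{(t)}$, its $\tilde j$-th column sums to $1$ yet contains the strictly negative entry $\delta t\,a_{\tilde i\leftarrow\tilde j}<0$, so that $\sum_i|S_{i\tilde j}|>\sum_i S_{i\tilde j}=1$. Hence the basis vector $\bm v=\bm e_{\tilde j}$ obeys $|S\bm v|_{\rm Tr}>|\bm v|_{\rm Tr}$: it is precisely the vector identified in the discussion above as detecting non-Markovianity, and the only obstruction is that $\Tr{\bm v}=1\neq0$, which forbids its use as a distance vector $\del=\p-\q$ on the system alone.

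The key step is to promote $\bm v$ to a genuine distance vector by attaching a traceless ancillary direction. I would take $\del=\bm v\otimes\del_{\rm anc}$ with $\del_{\rm anc}\in\RR^M$ and $\Tr{\del_{\rm anc}}=0$, so that $\Tr{\del}=\Tr{\bm v}\,\Tr{\del_{\rm anc}}=0$ and $\del$ can be realised as $\del=\p-\q$ for two states $\p,\q$ prepared at time $0$ in the interior of $\st(\RR^N\otimes\RR^M)$ (after rescaling $\del$ to be infinitesimal and adding it symmetrically about an interior base-point, which secures positivity and normalisation simultaneously). Because the dynamics factorises, the ancillary factor is frozen and $\del(t)=(T^{(t,0)}[\bm v])\otimes\del_{\rm anc}$; the multiplicativity of the $\ell^1$ norm over tensor products then gives $|\del(t)|_{\rm Tr}=|T^{(t,0)}[\bm v]|_{\rm Tr}\,|\del_{\rm anc}|_{\rm Tr}$. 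Thus the trace distance of the evolving pair increases at time $t$ exactly because $|\bm v(t)|_{\rm Tr}$ does, whereas it had been non-increasing throughout the stochastic portion $[0,t]$ (stochastic maps contracting $\ell^1$ on every vector) --- which is the required revival.

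The remaining point is the dimension count. Since only one traceless direction on the ancilla is needed, $\RR^2$ already supplies one (e.g.\ $\del_{\rm anc}\propto(1,-1)$), so $M=2$ suffices in the classical case; the quantum statement under CP-divisibility follows from the completely-positive version of the same construction given in Ref.~\cite{bylicka2017constructive}. The only delicate check --- and the mild obstacle of the argument --- is verifying that the tensor vector $\del$ does correspond to an honest pair of probability vectors in the interior of the enlarged simplex, but this is immediate from the symmetric infinitesimal construction just described.
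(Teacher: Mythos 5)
Your witnessing vector and the tensor trick are the same as the paper's ($\bm v=\bm e_{\tilde j}$ for the column carrying the negative rate, times a traceless ancillary vector, plus multiplicativity of the $\ell^1$ norm), but the way you anchor it in time opens a genuine gap. You prepare $\del(0)=\bm e_{\tilde j}\otimes\del_{\rm anc}$ at time $0$, so the distance vector when non-Markovianity sets in is $\del(t)=\bigl(T^{(t,0)}[\bm e_{\tilde j}]\bigr)\otimes\del_{\rm anc}$, and a revival at time $t$ requires $S=T^{(t+\delta t,t)}$ to expand the norm of $\bm w:=T^{(t,0)}[\bm e_{\tilde j}]$ --- not of $\bm e_{\tilde j}$, which is all your column argument establishes. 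The difference is fatal in general: $\bm w$ is a probability vector, and as soon as the (stochastic) evolution on $[0,t]$ mixes enough to bring it into the interior of the simplex, $w_i>0$ for all $i$, the entries of $S\bm w=\bm w+\delta t\,R^{(t)}\bm w$ stay positive for small $\delta t$, and, since the columns of $R^{(t)}$ sum to zero,
\begin{align}
\frac{\de}{\dt}\bigl|\bm w\otimes\del_{\rm anc}\bigr|_{\rm Tr}
=\sum_{i,\alpha}{\rm sign}\bigl((\del_{\rm anc})_\alpha\bigr)\,(R^{(t)}\bm w)_i\,(\del_{\rm anc})_\alpha
=\Bigl(\sum_i (R^{(t)}\bm w)_i\Bigr)|\del_{\rm anc}|_{\rm Tr}=0\,,
\end{align}
no matter how negative the rates are. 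So your prepared pair generically shows no revival at all. This is precisely the ``accessibility of the image of $T^{(t,0)}$'' subtlety around which the paper's Theorem~\ref{theo:no-go} revolves, so it cannot be waved away as a mild technicality.

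The missing step --- and the way the paper closes the argument --- is to anchor the witnessing vector at time $t$ rather than at time $0$, by combining the two ingredients you mention but never use together: translation invariance of the trace distance and invertibility of smooth evolutions. Choose $\rr$ in the interior of the image of $\bar T^{(t,0)}=T^{(t,0)}\otimes\mathds{1}_M$, and $\varepsilon$ small enough that $\rr_\varepsilon=\rr+\varepsilon\,\bm e_{\tilde j}\otimes\del_{\rm anc}$ also lies in that image; then prepare at time $0$ the legitimate states $(\bar T^{(t,0)})^{-1}[\rr]$ and $(\bar T^{(t,0)})^{-1}[\rr_\varepsilon]$. Their difference at time $t$ is exactly $\varepsilon\,\bm e_{\tilde j}\otimes\del_{\rm anc}$, on which your computation does apply and yields $\frac{\de}{\dt}|\del(t)|_{\rm Tr}=2\varepsilon\,|\del_{\rm anc}|_{\rm Tr}\sum_{i:\,a_{i\leftarrow\tilde j}<0}|a_{i\leftarrow\tilde j}|>0$, i.e.\ the desired revival after a monotone decay on $[0,t]$. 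With this pull-back your proof coincides with the paper's; without it, the construction works only in special cases (e.g.\ when $T^{(t,0)}$ acts as a scalar on traceless vectors, as in the depolarizing-type example used in the proof of Theorem~\ref{theo:no-go}), not for a generic non-Markovian dynamics as the Lemma requires.
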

	In the quantum case, $M=N+1$ is needed, where $N$ is the dimension of the quantum state~\cite{bylicka2017constructive}. We close the section with a final remark. Notice that even \emph{less} than a finite ancilla would be enough. By this we mean that it is enough to enlarge the state space with an additional microstate that does not interact with the others, i.e., considering $\st(\mathbb{R}^N)\rightarrow \st(\mathbb{R}^{N+1})$ and a dynamics of the form $T'=T\oplus 1$ (i.e., $T'_{ij}=T_{ij}$ for $i,j\in\curbra{1,\dots,N}$ and $T'_{ij}=\delta_{ij}$ if one between $i$ and $j$ is the $N+1$ index). Given $\bm v\in \RR^N$ such that $|\dot{\bm v}|>0$, one can consider $\del\in \RR^{N+1}$ with
	\begin{align}
		\del = \begin{cases}
			d_i = v_i & {\rm if }\;i\in\curbra{1,\dots,N}\\
			d_{i}=-\sum_{i=1}^{N} v_i & {\rm if }\;i\equiv N+1
		\end{cases}
	\end{align}
	Since the map $T$ is trace-preserving at all times, such vector satisfies
	\begin{align}
		|\dot{\del}|=|\dot{\bm v}|\;,
	\end{align}
	allowing for the witnessing of non-Markovianity. This result should be compared with the discussion above about how the trace condition lowers the accessible dimension by one.
	
	\section{Rates of Fisher Information flow and contractivity}
	\label{app:FI_case}
	
	We present here a decomposition of the derivative of the Fisher information in a sum of independent flows. In particular,by explicitly using the expression in Eq.~\eqref{eq:canonical_basis} for the dynamics we can compute
	\begin{multline}
		2\,\frac{\de}{\dt} D^2_{\rm Fish}(\p,\p+\del)=\sum_i \norbra{\frac{2d_i \dot{d_i}}{p_i}-\frac{d_i^2}{p_i^2}\,\dot{p_i}}=\\
		=\sum_{i,j}\norbra{\frac{2d_i (a_{i\leftarrow j}{d_j}-a_{j\leftarrow i}d_i)}{p_i}-\frac{d_i^2}{p_i^2}(a_{i\leftarrow j}{p_j}-a_{j\leftarrow i}p_i)}=
		\\
		=\sum_{i,j}\norbra{\frac{2d_id_j a_{i\leftarrow j}}{p_i}-\frac{d_i^2}{p_i}a_{j\leftarrow i}-\frac{d_i^2}{p_i^2}p_j a_{i\leftarrow j}} 
		=\\ 
		=\sum_{i,j}a_{i\leftarrow j}\norbra{\frac{2d_id_j }{p_i}-\frac{d_j^2}{p_j}-\frac{d_i^2}{p_i^2}p_j }=\\
		=-\,\sum_{i\neq j} a_{i\leftarrow j} \left( \frac{d_i}{p_i}-\frac{d_j}{p_j}\right)^2 p_j \leq 0\;,
		\label{eqapp:fish_evol}
	\end{multline}
	where between the third and the fourth line we made the change of variable $i\rightarrow j$ in order to factor out $a_{i\leftarrow j}$. Whenever the evolution is Markovian (i.e., the rates $a_{i\leftarrow j}$ are all positive), the Fisher information is contracting, in analogy with what happened for the trace distance. On the other hand, if one compares the result	in Eq.~\eqref{eqapp:trace_deriv} with what we just obtained, there is an important difference between the two: whereas the evolution of the trace distance only depends on $\del$ (mirroring the translational invariance of this quantity), the Fisher information explicitly depends on the base-point. This difference is particularly important when considering the technical proofs of the results of this work.
	
	\section{The quantum Fisher information metric and its relation to the trace distance}\label{app:fisherAndTrace}
	
	The extension of the Fisher information to quantum systems is done by generalising Chentsov's theorem to completely positive, trace preserving maps (CPTP). That is, a metric on quantum states is called monotone if it decreases under all CPTP maps. Then, it was shown by Petz in~\cite{petzMonotoneMetricsMatrix1996} that all such metrics are induced by scalar products of the form:
	\begin{align}
		K^{f}_\rho(A,B)	 := \frac{1}{2} \Tr{A^\dagger \,\J_{f,\rho}^{-1} [B]},
	\end{align}
	where $\J_{f,\rho}$ is a self-adjoint superoperator given by:
	\begin{align}
		\J_{f,\rho} := \RR_\rho \,f(\LL_\rho\RR_\rho^{-1}),
	\end{align}
	and $\LL_\rho/\RR_\rho$ are the left/right multiplication operators acting as $\LL_\rho\pi = \rho\pi$ (respectively $\RR_\rho\pi = \pi\rho$) and $f:\RR^+\rightarrow\RR^+$ is a standard operator monotone function. Despite its complicated form, the interpretation of $K^f_\rho(A,B)$ as the natural extension of the Fisher information to the quantum regime is corroborated by the result in~\cite{lesniewskiMonotoneRiemannianMetrics1999}, where it was shown that the same quantity emerges from the local expansion of the quantum generalisation of  Csizár divergences. For this reason, we define the family of quantum Fisher distances as
	\begin{align}
		D_{{\rm Fish},\,f}^2(\rho,\rho+\delta\rho)\simeq
		\,K^f_\rho(\delta\rho,\delta\rho) :=
		\frac{1}{2}\Tr{\delta\rho^\dagger \,\J_{f,\rho}^{-1} [\delta\rho]}\;.
	\end{align}
	
	The uniqueness of the classical Fisher metric is hence substituted with a whole family of different monotone metrics. It is interesting to point out, though, that when $\J_{f,\rho}$ acts on diagonal states it behaves (independently of $f$) as the multiplication by $\rho$. That is, if $[A,\rho] =[B,\rho] = 0$, with a small abuse of notation we have
	\begin{align}
		K^f_\rho(A,B)	 := \frac{1}{2}\Tr{\frac{A^\dagger B}{\rho}} = \left\langle A,B\right\rangle_\rho,
	\end{align}
	that is all the quantum Fisher metrics collapse to the classical one for diagonal states. This feature allows us to lift most of our classical constructions to the quantum scenario without further complications.
	
	Specifically, the following Lemma will be particularly relevant:
	
	\begin{lemma}
		\label{lem:fisher_almost_diag}
		Given a state $\rho =  \sum_i \rho_i\ketbra{i}{i}$, any perturbation $\delta\rho$ can be decomposed in diagonal and coherent part as:
		\begin{align}
			\delta\rho=\delta\rho_\Delta+\delta\rho_C \quad \text{with} \quad
			[\rho,\delta\rho_\Delta]=0 \quad \text{and} \quad \bra{i}\delta\rho_C\ket{i}=0\, ,
		\end{align}  
		that is, $\delta\rho_\Delta$ is the diagonal part of the matrix $\delta\rho$ and $\delta\rho_C$ the off-diagonal. Then for all $f$
		\begin{align}
			\Tr{\delta\rho\,\J^{-1}_{f,\rho}[\delta\rho]}&=\Tr{\delta\rho_\Delta\,\J^{-1}_{f,\rho}[\delta\rho_\Delta]} + \Tr{\delta\rho_C\,\J^{-1}_{f,\rho}[\delta\rho_C]}\,=\\
			&=2\left\langle \bm{\delta},\bm{\delta}\right\rangle_{\bm{\rho}}+ \Tr{\delta\rho_C\,\J^{-1}_{f,\rho}[\delta\rho_C]}\,,
		\end{align}
		where the components of the vectors $\bm \delta$, $\bm \rho$ are specified as
		\begin{align}
			\bm{\rho}_i=\bra{i}\rho\ket{i},\;
			\bm{\delta}_i=\bra{i}\delta\rho\ket{i}=\bra{i}\delta\rho_\Delta\ket{i}\,.
		\end{align}
	\end{lemma}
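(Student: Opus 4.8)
The plan is to diagonalize the superoperator $\J_{f,\rho}$ in the basis of matrix units adapted to the eigenbasis $\{\ket{i}\}$ of $\rho$, and then exploit Hilbert--Schmidt orthogonality to eliminate the cross-terms between the diagonal and coherent sectors. Writing $E_{ij}:=\ketbra{i}{j}$, one has $\LL_\rho E_{ij}=\rho_i E_{ij}$ and $\RR_\rho E_{ij}=\rho_j E_{ij}$, so each matrix unit is a simultaneous eigenvector of left and right multiplication. Hence $\J_{f,\rho}E_{ij}=\rho_j\,f(\rho_i/\rho_j)\,E_{ij}$ and $\J^{-1}_{f,\rho}E_{ij}=[\rho_j f(\rho_i/\rho_j)]^{-1}E_{ij}$. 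The structural fact I would rely on is that $\J^{-1}_{f,\rho}$ acts diagonally in this basis: it sends every $E_{ij}$ to a scalar multiple of itself, and therefore maps diagonal operators to diagonal operators and zero-diagonal operators to zero-diagonal operators.

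Next I would expand $\delta\rho_\Delta=\sum_i \delta_i E_{ii}$ and $\delta\rho_C=\sum_{i\neq j}c_{ij}E_{ij}$ with $\delta_i=\bra{i}\delta\rho\ket{i}$, and insert the split $\delta\rho=\delta\rho_\Delta+\delta\rho_C$ into $\Tr{\delta\rho\,\J^{-1}_{f,\rho}[\delta\rho]}$. This produces four terms. By the previous step $\J^{-1}_{f,\rho}[\delta\rho_C]$ is still supported on off-diagonal units while $\delta\rho_\Delta$ is diagonal, so the cross-term $\Tr{\delta\rho_\Delta\,\J^{-1}_{f,\rho}[\delta\rho_C]}$ is the pairing of a diagonal matrix with a zero-diagonal matrix and vanishes; symmetrically $\J^{-1}_{f,\rho}[\delta\rho_\Delta]$ is diagonal and $\delta\rho_C$ has zero diagonal, killing the other cross-term. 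Only the two pure contributions survive, which is exactly the claimed additive decomposition.

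Finally, for the diagonal block I would invoke the identity recorded just above the Lemma, namely that on operators commuting with $\rho$ the superoperator $\J_{f,\rho}$ reduces to multiplication by $\rho$ (using the normalization $f(1)=1$). This gives $\Tr{\delta\rho_\Delta\,\J^{-1}_{f,\rho}[\delta\rho_\Delta]}=\sum_i \delta_i^2/\rho_i=2\left\langle\bm{\delta},\bm{\delta}\right\rangle_{\bm{\rho}}$ by definition of the classical Fisher scalar product, completing the proof. I do not expect a genuine obstacle; the only point requiring care is the possible degeneracy of $\rho$, where the eigenbasis is non-unique. Fixing any eigenbasis leaves the argument intact, since the off-diagonal units $E_{ij}$ lying inside a degenerate block (for which $f(\rho_i/\rho_j)=f(1)=1$) are still Hilbert--Schmidt orthogonal to the genuine diagonal units $E_{ii}$, so they contribute only to the coherent sector and never mix with $\delta\rho_\Delta$.
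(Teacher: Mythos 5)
Your proof is correct and takes essentially the same route as the paper's: both rest on the fact that $\J^{-1}_{f,\rho}$ preserves the diagonal/off-diagonal split in the eigenbasis of $\rho$ (so the cross-terms vanish under the trace pairing), and that on operators commuting with $\rho$ the superoperator $\J_{f,\rho}$ reduces to multiplication by $\rho$, collapsing the diagonal block to the classical Fisher form. Your explicit spectral decomposition over the matrix units $\ketbra{i}{j}$ and the remark on degenerate spectra simply spell out in more detail what the paper compresses into a one-line justification.
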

	This  result directly follows from the fact that
	\begin{align}
		\Tr{\delta\rho_\Delta\,\J^{-1}_{f,\rho}[\delta\rho_C]}=\Tr{\delta\rho_C\,\J^{-1}_{f,\rho}[\delta\rho_\Delta]}=0\;,
	\end{align}
	since $\J^{-1}_{f,\rho}[\delta\rho_\Delta]$ is itself diagonal in the basis $\ket{i}$ of eigenvectors of $\rho$. Then, we can use the Lemma above to prove the following corollary:
	
	\begin{corollary}
		\label{cor:quantum_to_classical}
		Consider a perturbation of the form $\delta\rho=\delta\rho_\Delta +\dt\delta\rho_C$, where $\dt$  is an infinitesimal quantity. Then, from Lemma~\ref{lem:fisher_almost_diag} it follows that 
		\begin{align}
			\Tr{\delta\rho\,\J^{-1}_{f,\rho}[\delta\rho]}=\Tr{\delta\rho_\Delta\,\J^{-1}_{f,\rho}[\delta\rho_\Delta]}+\ord{\dt^2}\, = 2 \,\left\langle \bm{\delta},\bm{\delta}\right\rangle_{\bm{\rho}} + \ord{\dt^2}.\label{eq:B14}
		\end{align}
		In particular, the time derivative of the Fisher Information between $\rho$ and $\rho+\delta\rho$ for $[\rho,\delta\rho]=0$ coincides with the derivative of the classical Fisher Information. That is:
		\begin{align}
			\frac{1}{2}\,\Tr{\delta\rho\,\J^{-1}_{f,\rho}[\delta\rho]}=\left\langle \bm{\delta},\bm{\delta}\right\rangle_{\bm{\rho}}\;
			\quad \text{and} \quad
			\frac{1}{2}\frac{\de}{\dt}\,\Tr{\delta\rho\,\J^{-1}_{f,\rho}[\delta\rho]}=\frac{\de}{\dt}\, \left\langle \bm{\delta},\bm{\delta}\right\rangle_{\bm{\rho}}\;.\label{eq:B15}
		\end{align}
	\end{corollary}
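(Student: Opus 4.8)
The plan is to obtain both displayed equations essentially for free from Lemma~\ref{lem:fisher_almost_diag}, using the prefactor $\dt$ in front of $\delta\rho_C$ as the small parameter that suppresses every coherent contribution. First I would apply the additive splitting of the Lemma to the perturbation $\delta\rho=\delta\rho_\Delta+\dt\,\delta\rho_C$, whose diagonal block is $\delta\rho_\Delta$ and whose off-diagonal block is $\dt\,\delta\rho_C$. By linearity of $\J^{-1}_{f,\rho}$, the coherent term of the Lemma evaluates to $\Tr{(\dt\,\delta\rho_C)\,\J^{-1}_{f,\rho}[\dt\,\delta\rho_C]}=\dt^2\,\Tr{\delta\rho_C\,\J^{-1}_{f,\rho}[\delta\rho_C]}=\ord{\dt^2}$, whereas the diagonal term collapses, via the reduction of $\J^{-1}_{f,\rho}$ to multiplication by $1/\rho$ on operators commuting with $\rho$, to $\Tr{\delta\rho_\Delta\,\J^{-1}_{f,\rho}[\delta\rho_\Delta]}=2\langle\bm\delta,\bm\delta\rangle_{\bm\rho}$. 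Summing the two reproduces Eq.~\eqref{eq:B14}.

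The first identity in Eq.~\eqref{eq:B15} is then simply the case $[\rho,\delta\rho]=0$, for which $\delta\rho_C$ vanishes identically and the $\ord{\dt^2}$ remainder is absent, so that $\tfrac12\,\Tr{\delta\rho\,\J^{-1}_{f,\rho}[\delta\rho]}=\langle\bm\delta,\bm\delta\rangle_{\bm\rho}$ holds exactly rather than only at leading order.

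The second identity, concerning the time derivative, is the only step requiring care. The point is that a perturbation that commutes with $\rho(t)$ at the instant $t$ does not stay diagonal under the flow: the smooth evolution generates coherences $\delta\rho_C$ at first order in the time increment, i.e. $\delta\rho(t+\dt)=\delta\rho_\Delta(t+\dt)+\dt\,\delta\rho_C+\ord{\dt^2}$, which is exactly the form of perturbation covered by Eq.~\eqref{eq:B14}. I would therefore evaluate the quantum Fisher quadratic form at both $t$ and $t+\dt$ using Eq.~\eqref{eq:B14}: the coherent pieces enter at $\ord{\dt^2}$ at both endpoints and hence drop out of the difference quotient defining $\frac{\de}{\dt}$. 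What remains is precisely the derivative of $2\langle\bm\delta,\bm\delta\rangle_{\bm\rho}$, giving the classical expression.

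The main obstacle is the bookkeeping behind this last step: I must verify that the coherences produced by the dynamics are genuinely $\ord{\dt}$ and that $\Tr{\delta\rho_C\,\J^{-1}_{f,\rho}[\delta\rho_C]}$ stays bounded as $\dt\to 0$, so that the coherent contribution is truly negligible relative to the first-order term. Both facts follow from working in the interior of the state space, where $\rho$ is full rank and $\J^{-1}_{f,\rho}$ is a bounded superoperator for every operator-monotone $f$, combined with the continuity and differentiability assumed for smooth evolutions; this is what guarantees that the quantum and classical derivatives coincide independently of the choice of $f$.
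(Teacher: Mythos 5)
Your argument is correct and follows essentially the same route as the paper: Eq.~\eqref{eq:B14} comes from the additive splitting of Lemma~\ref{lem:fisher_almost_diag} with the coherent block scaling as $\dt^2$, and the derivative identity in Eq.~\eqref{eq:B15} follows because the dynamics only generates coherences at order $\dt$, which enter the quadratic form at order $\dt^2$ and therefore drop out of the difference quotient. The only detail the paper makes explicit that you leave implicit is that it is immaterial whether the diagonal part at time $t+\dt$ is taken with respect to the eigenbasis of $\rho$ or of the evolved state $\tilde\rho$, since that ambiguity also only affects the perturbation at order $\ord{\dt}$.
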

	In fact, consider the scenario in which initially the perturbation is of the form $\delta\rho\equiv\delta\rho_\Delta$. In order to compute the derivative, one considers the evolution of the state $\rho+\delta\rho$ for a time $\dt$, which we denote by $\tilde\rho+\delta\tilde\rho$. Then, the perturbation has the form $\delta\tilde\rho=\delta\tilde\rho_\Delta+\dt\delta \tilde\rho_C$, so that we are in the situation of Eq.~\eqref{eq:B14} (notice that it doesn't matter whether we take $\delta\tilde\rho_\Delta$ to be diagonal with respect to $\rho$ or to $\tilde\rho$, as this difference only contributes to order $\ord{\dt}$). Since quadratic terms in $\dt$ do not contribute to the derivative, these considerations prove Eq.~\eqref{eq:B15}.
	
	Finally, in the next Lemma we show that there are special points for which the trace distance and the Fisher information locally coincide. This result allows us to lift the many constructions present in the literature for the trace distance to the study of the Fisher information metric.
	
	\begin{lemma}
		\label{lem:special_points}
		Choose an arbitrary perturbation $\delta\rho$. Then, consider the state $\rho_{\delta\rho}=\frac{|\delta\rho|}{\Tr{|\delta\rho|}}$. It holds that
		\begin{align}
			D^2_{\rm Fish}(\rho_{\delta\rho},\rho_{\delta\rho}+\delta\rho)=
			\frac{1}{2}\,D^2_{\rm Tr}(\rho_{\delta\rho},\rho_{\delta\rho}+\delta\rho)
			=\frac{1}{2}\,\Tr{|\delta\rho|}^2\,.
		\end{align}
		Moreover, since $[\rho_{\delta\rho},\delta\rho] = 0$, one can use Corollary~\ref{cor:quantum_to_classical} to show that:
		\begin{align}
			\frac{\de}{\dt} D^2_{\rm Fish}(\rho_{\delta\rho},\rho_{\delta\rho}+\delta\rho)=
			\frac{1}{2}\,\frac{\de}{\dt} D^2_{\rm Tr}(\rho_{\delta\rho},\rho_{\delta\rho}+\delta\rho)
			= \frac{1}{2}\frac{\de}{\dt} \Tr{|\delta\rho|}^2
		\end{align}
	\end{lemma}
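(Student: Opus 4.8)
The plan is to pass to the eigenbasis of the perturbation, where everything becomes diagonal and hence classical, and then to match the two distances (and their derivatives) term by term. First I would diagonalise the Hermitian perturbation, $\delta\rho=\sum_k\lambda_k\ketbra{k}{k}$ with real eigenvalues $\lambda_k$. Then $|\delta\rho|=\sum_k|\lambda_k|\ketbra{k}{k}$, so that $\rho_{\delta\rho}=\sum_k p_k\ketbra{k}{k}$ with $p_k=|\lambda_k|/L$ and $L:=\Tr{|\delta\rho|}=\sum_k|\lambda_k|$. Both operators are diagonal in the same basis, which gives at once the stated $[\rho_{\delta\rho},\delta\rho]=0$ and places us exactly in the commuting setting of Corollary~\ref{cor:quantum_to_classical}, where every quantum Fisher metric reduces to the classical $\langle\bullet,\bullet\rangle_{\p}$. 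I can therefore replace $\delta\rho$ by the vector $\del$ with components $d_k=\lambda_k$ and $\rho_{\delta\rho}$ by the probability vector $\p$ with components $p_k$.

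The static identities then follow by direct substitution into~\eqref{eq:fish_def}. Using $d_k^2/p_k=\lambda_k^2\,L/|\lambda_k|=L|\lambda_k|$ I get $D^2_{\rm Fish}=\tfrac12\sum_k d_k^2/p_k=\tfrac12 L\sum_k|\lambda_k|=\tfrac12 L^2$, while in the same diagonal basis $D_{\rm Tr}(\rho_{\delta\rho},\rho_{\delta\rho}+\delta\rho)=\sum_k|\lambda_k|=\Tr{|\delta\rho|}=L$, whence $\tfrac12 D^2_{\rm Tr}=\tfrac12 L^2$. This proves the first line of the statement.

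For the derivative I would invoke Corollary~\ref{cor:quantum_to_classical} to replace the time derivative of the quantum Fisher information by that of its classical counterpart, and then evaluate the classical flow formula~\eqref{eqapp:fish_evol} and the trace formula~\eqref{eqapp:trace_deriv} at the special configuration. The decisive simplification is that at this point $d_i/p_i=L\,\mathrm{sign}(d_i)$: writing $s_i:=\mathrm{sign}(d_i)$ and $p_j=|d_j|/L$, and using $(s_i-s_j)^2=2(1-s_is_j)$, the Fisher flow collapses to
\begin{equation}
\frac{\de}{\dt}D^2_{\rm Fish}(\rho_{\delta\rho},\rho_{\delta\rho}+\delta\rho)=L\sum_{i\neq j}a_{i\leftarrow j}(s_is_j-1)|d_j|\,.
\end{equation}
Performing the same substitution in~\eqref{eqapp:trace_deriv} (relabelling $i\leftrightarrow j$ to factor out $a_{i\leftarrow j}$, and using $\frac{\de}{\dt}D^2_{\rm Tr}=2L\,\frac{\de}{\dt}D_{\rm Tr}$ together with $d_j=s_j|d_j|$) yields exactly $\tfrac12\frac{\de}{\dt}D^2_{\rm Tr}=L\sum_{i\neq j}a_{i\leftarrow j}(s_is_j-1)|d_j|$, matching the Fisher expression and closing the argument.

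The step I expect to require the most care is the treatment of vanishing eigenvalues of $\delta\rho$: there $\rho_{\delta\rho}$ lies on the boundary of the state space, the metric~\eqref{eq:fish_def} is singular, and the rate of contraction is only guaranteed to be well defined in the interior. I would handle this by restricting all of the above to the support of $\delta\rho$, on whose orthogonal complement both $\rho_{\delta\rho}$ and $\delta\rho$ vanish and contribute nothing; every $p_k$ appearing is then strictly positive and the identity $d_i/p_i=L\,s_i$ is unambiguous. The Corollary~\ref{cor:quantum_to_classical} reduction, which feeds the coherent part of the evolved perturbation in only at order $\ord{\de t^2}$, then legitimately transfers the classical computation to the quantum derivative.
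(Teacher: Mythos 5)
Your core argument is correct and, for the static identity, coincides with the paper's own proof: $|\delta\rho|$ is diagonal in the eigenbasis of $\delta\rho$, hence $[\rho_{\delta\rho},\delta\rho]=0$, Corollary~\ref{cor:quantum_to_classical} reduces every quantum Fisher metric and its time derivative to the classical expression, and the substitution $p_k=|\lambda_k|/L$ gives $D^2_{\rm Fish}=\tfrac12 L^2=\tfrac12 D^2_{\rm Tr}$. For the derivative the paper proceeds slightly differently: instead of evaluating the two flow formulas~\eqref{eqapp:fish_evol} and~\eqref{eqapp:trace_deriv} at the special point, it differentiates $\sum_i\delta\rho_i^2/(\rho_{\delta\rho})_i$ directly and observes (Eqs.~\eqref{eq:C27}--\eqref{eq:C28}) that the contribution of the moving base point drops out, because $\delta\rho_i^2/(\rho_{\delta\rho})_i^2=\Tr{|\delta\rho|}^2$ is independent of $i$ and $\dot\rho_{\delta\rho}$ is traceless. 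Your rate-by-rate matching, using $d_i/p_i=L\,s_i$ and $(s_i-s_j)^2=2(1-s_is_j)$, is an equivalent and algebraically correct version of the same computation; nothing is lost or gained by this choice.

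The genuine problem is your final paragraph on vanishing eigenvalues. It is \emph{not} true that the orthogonal complement of the support of $\delta\rho$ ``contributes nothing'' to the derivatives: when the generator couples the support to the kernel, the boundary components contribute at first order in $\delta t$, and they contribute \emph{differently} to the two sides of the identity. Concretely, take $N=3$, $\del=(a,-a,0)$ with $0<a\leq\tfrac12$, so that $\rho_{\del}=(\tfrac12,\tfrac12,0)$ and $L=2a$, and a Markovian generator whose only nonzero rates are $a_{3\leftarrow 1}=a_{3\leftarrow 2}=r>0$. The kernel component of the perturbation stays zero, $d_3(t+\delta t)=r\delta t\,(d_1+d_2)=0$, so $\Tr{|\del(t+\delta t)|}=2a(1-r\delta t)$ and $\tfrac12\frac{\de}{\dt}\Tr{|\del|}^2=-4a^2r$; but the evolved base point is $\bigl(\frac{1-r\delta t}{2},\frac{1-r\delta t}{2},r\delta t\bigr)$, and the Fisher distance between the evolved states is $2a^2(1-r\delta t)$, whose derivative is $-2a^2r$. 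The two sides of the lemma's second display therefore disagree (your support-restricted formulas would instead give $0$ for both, also wrong, since the leakage rates $a_{3\leftarrow i}$ act on the support components and cannot be discarded). The correct repair is not restriction to the support but the assumption that $\delta\rho$ has full rank, so that $\rho_{\delta\rho}$ lies in the interior of the state space --- consistent with the paper's general caveat that contraction rates are well defined only there. Note that the paper's own proof makes the same interiority assumption silently (its term-by-term differentiation in Eq.~\eqref{eq:C27} requires $(\rho_{\delta\rho})_i>0$), so you correctly identified a real subtlety; only your proposed fix fails.
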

	In fact, since $[\rho_{\delta\rho},\delta\rho] = 0$, the quantum Fisher information  and its derivative can be studied just by looking at the quantity $\left\langle \bm{\delta},\bm{\delta}\right\rangle_{\bm{\rho_{\delta\rho}}}$. But this is given by:
	\begin{align}
		2 \,\left\langle \bm{\delta},\bm{\delta}\right\rangle_{\bm{\rho_{\delta\rho}}} = \sum_i \frac{\delta_i^2 }{\norbra{\frac{|\delta_i |}{\Tr{|\delta\rho|}}}} = \sum_i \frac{\delta_i^2 }{|\delta_i |}\sum_j |\delta_j| = \norbra{ \sum_i |\delta_i|}^2 = D^2_{\rm Tr}(\rho_{\delta\rho},\rho_{\delta\rho}+\delta\rho)\,.
	\end{align} 
	The fact that not only one can locally identify the Fisher distance and the trace distance, but also their first derivatives will be of key importance in many of our derivations.
	
	\section{Additional proofs}\label{app:proofs}
	\subsection{Theorem~\ref{theo:markov_iff_contract}: quantum case}\label{app:qThm1}
	In the case of a quantum map $\mathcal{T}^{(t,0)}$ non-Markovianity  means that $\mathcal{T}^{(t,0)}$ is not CP-divisible. This implies that there exists a time $t$ for which the Choi state $\mathcal{T}^{(t+\dt,t)}\otimes\eye_N[\ketbra{\psi^+}{\psi^+}]$ develops some negative eigenvalue (where $\ket{\psi^+}$ is the maximally entangled state). Call $\ket{v}$ the corresponding eigenvector. Since $\mathcal{T}^{(t+\dt,t)}$ is infinitesimal, the Choi state must be close to $\ketbra{\psi^+}{\psi^+}$. But then, in order for  $\ket{v}$ to correspond to a negative eigenvalue it must contain a non-zero component  $\ket{v_\perp}$ orthogonal to $\ket{\psi^+}$. To see this, assume the opposite, i.e., $\ket{v} \equiv \ket{\psi^+} $ (as it is parallel to $\ket{\psi^+}$ and normalised); then, from perturbation theory we know that the corresponding eigenvalue must be $1+\ord{\dt}>0$. This contradicts the assumption that $\ket{v}$ is associated to a negative eigenvalue.
	
	Consider now the state $\rho=\ketbra{\psi^+}{\psi^+}$ and the perturbation $\delta\rho=\varepsilon(\ketbra{v_\perp}{v_\perp}-\ketbra{\psi^+}{\psi^+})$. With this choice we have that $[\rho,\delta\rho]=0$. Moreover, the evolution of $\delta\rho$ through $\T^{(t+\dt)}$ can only generate an off-diagonal component of order $\ord{\dt}$. Hence, we are in the situation of Corollary~\ref{lem:fisher_almost_diag}, that is we can neglect the off-diagonal contributions completely. In this way, we can simulate the process with a classical dynamics by only considering transitions from the diagonal into itself. Then, it is enough to notice that the classical rate $a_{v_{\perp}\leftarrow \psi^+}$ is negative by assumption. This concludes the proof.

	\subsection{Theorem~\ref{theo:no-go}: many copy case}
	\label{app:theo_nogo}
	 Consider the dynamics $\Bar{T}^{(t,0)}={T^{(t,0)}}^{\otimes n}\otimes \mathds{1}_{M}$ acting on the state space $\mathcal{S}({\mathbb{R}^N}^{\otimes n}\otimes \mathbb{R}^M)$. The rate matrix in this case is given by
	\begin{align}
		\Bar{R}^{(t)}=\frac{\de}{\dt}\,\Bar{T}^{(t,0)}= \sum_{l=1}^n\, \mathds{1}_N^{\otimes (l-1)}\otimes R^{(t)}\otimes \mathds{1}_N^{\otimes (n-l)}\otimes\mathds{1}_M\,.
	\end{align}
	Denoting  by latin letters the indexes on the copies of the system space, and by greek letters the indexes of the ancillary space, we can express the rates as 
	\begin{align}
		\label{eqapp:copyrates}
		\Bar{a}^{(t)}_{(i_1 i_2 \dots i_n ,\alpha) \leftarrow (j_1 j_2 \dots j_n, \beta)}=\sum_{l=1}^n a^{(t)}_{i_l\leftarrow j_l}
		\left(\delta_{i_1j_1}\dots\delta_{i_{l-1}j_{l-1}}\delta_{i_{l+1}j_{l+1}}\dots\delta_{i_nj_n}\delta_{\alpha\beta}\right)\,.
	\end{align} 
	In analogy with the construction for a single copy, it is now sufficient to consider a dynamics $T^{(t,0)}$ such that the image of $T^{(t,0)}$ is contained in a small ball around an appropriate $\ppi$, i.e.,
	\begin{align}
		\p(t)\sim \bm \pi^{\otimes n} \otimes \q +\ord{\varepsilon}\;,\ \ppi\in\st(\RR^N)\;,\ \q\in\st(\RR^M)\,.
	\end{align}
	In this case Eq.~\eqref{eqapp:fish_evol} takes the form
	\begin{align}
		\label{eqapp:dfish_teo2_multicopy}
		-\sum_{\Vec{i}\neq\Vec{j},\alpha} \bar{a}_{\Vec{i}\alpha\leftarrow\Vec{j}\alpha}\left( \frac{d_{\Vec{i}\alpha}}{p_{\Vec{i}\alpha}} -\frac{d_{\Vec{j}\alpha}}{p_{\Vec{j}\alpha}}  \right)^2 p_{\Vec{j}\alpha} +\ord{\varepsilon} \quad \text{with } p_{\Vec{j}\alpha}=\pi_{j{_1}}\dots\pi_{j{_n}}q_\alpha \,.
	\end{align}
	Suppose now that $a_{1\leftarrow 2}$ is the only negative rate of $R^{(t)}$. We want to show that despite the onset of non-Markovianity, the Fisher distance continues to decrease for all the points in the image of $\Bar{T}^{(t,0)}$, i.e., the derivative expressed in Eq.~\eqref{eqapp:dfish_teo2_multicopy} is negative. Then, consider for the moment the positive contributions to Eq.~\eqref{eqapp:dfish_teo2_multicopy}. These are given by 
	\begin{align}
		\label{eqapp:positive_term}
		-\sum_{l=1}^na_{1\leftarrow 2} \left( \frac{d_{\vec{j}_{l=1}\alpha}}{p_{\vec{j}_{l=1}\alpha}} - \frac{d_{\vec{j}_{l=2}\alpha}}{p_{\vec{j}_{l=2}\alpha}} \right)^2 p_{\vec{j}_{l=2}\alpha}\,,
	\end{align}
	where $\vec{j}_{l=2}$ is the string $\vec{j}_{l=2}=(j_1,\dots,j_{l-1},2,j_{l+1},\dots,j_n)$. We can now compare Eq.~\eqref{eqapp:positive_term} with the following contribution
	\begin{align}
		\label{eqapp:positive_term1}
		-\sum_{l=1}^na_{2\leftarrow 1} \left( \frac{d_{\vec{j}_{l=1}\alpha}}{p_{\vec{j}_{l=1}\alpha}} - \frac{d_{\vec{j}_{l=2}\alpha}}{p_{\vec{j}_{l=2}\alpha}} \right)^2 p_{\vec{j}_{l=1}\alpha}\,,
	\end{align}
	where we defined $\vec{j}_{l=1}$ in analogy with $\vec{j}_{l=2}$. Summing up Eq.~\eqref{eqapp:positive_term} and Eq.~\eqref{eqapp:positive_term1} then turns out to be negative whenever $p_{\vec{j}_{l=1}\alpha} a_{2\leftarrow 1} \geq
	p_{\vec{j}_{l=2}\alpha} |a_{1\leftarrow 2}|$,  i.e., $\pi_1 a_{2\leftarrow 1} \geq
	\pi_2 |a_{1\leftarrow 2}|$, in complete analogy with what happened in the single copy case.
	
	\subsubsection{Theorem~\ref{theo:no-go}: quantum case}
	Theorem~\ref{theo:no-go} holds also in the quantum setting. The extension of the proof makes use again of Lemma~\ref{lem:fisher_almost_diag}.
	In particular, we exploit a quantum dynamical map $\mathcal{T}^{(t,0)}$ with a strong dephasing which reduces the states to be (almost) classical. That is, consider the evolution given 
	\begin{align}
	\label{eqapp:Tquant}
		\T^{(t,0)}[\rho]=\mathcal{D}^{(\varepsilon_2)}\circ\mathcal{F}^{(\varepsilon_1)}_\pi[\rho]\;,
	\end{align}
	where
	\begin{align}
		\mathcal{F}^{(\varepsilon_1)}_\pi[\rho]&= (1-\varepsilon_1)\pi+ \varepsilon_1\rho\;,\\
		\mathcal{D}^{(\varepsilon_2)}[\rho]&= (1-\varepsilon_2)\rho_D+\varepsilon_2\rho\;,\quad \rho_D=\sum_i \ketbra{i}{i}\rho\ketbra{i}{i} \;,
	\end{align}
	and $\ket{i}$ is an eigenbasis of $\pi$ (so that $\mathcal{D}$ is the dephasing operator in the basis of $\pi$). 
	
	Now thanks to Lemma~\ref{lem:fisher_almost_diag} for small enough $\varepsilon_2$, one can compute the Quantum Fisher Information and its instantaneous variation by reducing it to its classical value. 
	To be precise, as the Theorem considers multiple copies of the channel and ancillary degrees of freedom, one needs to verify that also in this case the evolution of the Fisher Information collapses onto the classical case.
	Consider any initially prepared $\rho$ and $\rho+\delta\rho$ quantum states of $\mathcal{H}^{\otimes n}\otimes \mathbb{C}^M$, where $\mathcal{H}$ is the $N$-dimensional space of the single-copy channel, and an $M$-dimensional ancilla is allowed. Then the evolution to time $t$ is given by
	\begin{align}
	    \bar{\mathcal{T}}^{(t,0)}={\mathcal{T}^{(t,0)}}^{\otimes n}\otimes \eye_M\;,
	\end{align}
	where $\mathcal{T}^{(t,0)}$ is as in Eq.~\eqref{eqapp:Tquant}. By defining $\sigma$ to be the reduced state of $\rho$ on the ancillary degrees of freedom, one has
	\begin{align}
	    \bar{\mathcal{T}}^{(t,0)}[\rho]=\pi^{\otimes n}\otimes \sigma + \ord{\varepsilon_1}\;.
	\end{align}
	At the same time, it also holds that
	\begin{align}
	    \bar{\mathcal{T}}^{(t,0)}[\delta\rho]=\varepsilon_1 \delta\rho_D +\ord{\varepsilon_1\varepsilon_2}\;,
	\end{align}
	where $\delta\rho_D={\mathcal{D}^{(0)}}^{\otimes n}\otimes\eye_M [\delta\rho]$. As such it can be expressed as
	\begin{align}
	    \delta\rho_D=\sum_\gamma \theta_D^{(\gamma)}\otimes \omega^{(\gamma)}\;, \quad [\theta_D^{(\gamma)},\pi^{\otimes n}]=0\;,
	\end{align}
	where the $\theta_D^{(\gamma)}$ are operators on $\mathcal{H}^{\otimes n}$ and $\omega^{(\gamma)}$ on $\mathbb{C}^M$. The quantum Fisher Information can then be computed as
	\begin{align}
	    \frac{\varepsilon_1^2}{2}\Tr{\delta\rho_D \,\J_{f,\pi^{\otimes n}\otimes \sigma}^{-1} [\delta\rho_D]}+\ord{\varepsilon_1^3}+\ord{\varepsilon_1^2\varepsilon_2}\;.
	\end{align}
	By substituting the expression for $\delta\rho_D$, we get that the leading order is
	\begin{align}
	\frac{{\varepsilon_1}^2}{2}
	\sum_{\gamma,\gamma'}
	\Tr{
	\theta_D^{(\gamma)}\J_{f,\pi^{\otimes n}}^{-1}\theta_D^{(\gamma')}
	}
	\Tr{
	\omega^{(\gamma)}\J_{f,\sigma}^{-1}\omega^{(\gamma')}
	}:= \frac{{\varepsilon_1}^2}{2} \sum_{\gamma,\gamma'} \mathcal{M}^{(1)}_{\gamma\gamma'} \mathcal{M}^{(2)}_{\gamma\gamma'}
	\;.
	\end{align}
	Given that only the first trace is time-dependent (the evolution on the ancillary degrees of freedom is trivial), the instantaneous derivative of the above equation can be written as
	\begin{align}
	   \frac{{\varepsilon_1}^2}{2} \sum_{\gamma,\gamma'}\left(\frac{\D}{\D t} \mathcal{M}^{(1)}_{\gamma\gamma'}\right) \mathcal{M}^{(2)}_{\gamma\gamma'}\;,
	\end{align}
	that is, as the trace-product of two matrices, $\frac{\D}{\D t} \mathcal{M}^{(1)}$ and $\mathcal{M}^{(2)}$. We now notice that $\mathcal{M}^{(2)}$ is positive definite, therefore it is enough for $\frac{\D}{\D t} \mathcal{M}^{(1)}$ to be negative definite in order for the product to be $\leq 0$.
	
	To prove the quantum version of Theorem~\ref{theo:no-go} it is then sufficient to provide a non-Markovian evolution for which $\frac{\D}{\D t} \mathcal{M}^{(1)}$ is negative definite. It is also enough to consider the case in which $\mathcal{M}^{(1)}$ reduces to its classical value, due to $[\theta_D^{(\gamma)},\pi]=0$. That is
	\begin{align}
	    \mathcal{M}^{(1)}_{\gamma\gamma'}=2\left\langle \bm{\theta}^{(\gamma)},\bm{\theta}^{(\gamma')}\right\rangle_{\bm{\pi}}\;,
	\end{align}
	where $\bm{\theta}^{(\gamma)}$ and $\bm{\pi}$ are the vectors given by the diagonal components of $\theta^{(\gamma)}$ and $\pi$ respectively. It is then clear that choosing at time $t$ a dynamics of the form
	\begin{align}
	    \mathcal{T}^{(t+\delta t, t)}=\eye+\delta t \mathcal{L}\;,
	\end{align}
	where $\mathcal{L}$ is a semi-classical Lindbladian with rates $a_{i\leftarrow j}$ and jump operators $\ketbra{i}{j}$
	\begin{align}
	    \mathcal{L}[\rho]=\sum_{i\neq j} a_{i\leftarrow j} \left( \ketbra{i}{j}\rho\ketbra{j}{i} - \frac{1}{2}\{\ketbra{j}{j},\rho\}\right)\;,
	\end{align}
	will induce the classical dynamics locally described by the rates $a_{i\leftarrow j}$ on the diagonal subspace of quantum states. 
	
	Finally, the classical version of Theorem~\ref{theo:no-go}, given above~\ref{app:theo_nogo}, ensures that there are instances in which at least one of the $a_{i\leftarrow j}$ is negative (and hence $\mathcal{T}^{t+\delta t,t}$ non-CP) while $\frac{\D}{\D t} \mathcal{M}^{(1)}$ is negative definite. This concludes the proof.

	\subsection{Proof of Theorem~\ref{theo:witness}}
	\label{app:teo_witness}
	Despite the negative result given by Theorem~\ref{theo:no-go}, we provide here an explicit construction to convert a non-Markovianity witness based on the trace-distance to one that uses the Fisher metric by exploiting a particular kind of post-processing. 
	
	Consider, in fact, the case in which during a non-Markovian dynamics there is some $\delta\rho$ such that $\frac{\de}{\dt}|\delta\rho|^2_{\rm Tr}$ is increasing in time. As we saw in Appendix~\ref{app:TRdist}, to guarantee the existence of such $\delta\rho$, it is sufficient to use an ancilla, namely by considering the dynamics $T^{(t,0)}\otimes\mathds{1}_2$ for the classical case, and $\mathcal{T}^{(t,0)}\otimes\mathds{1}_{N+1}$ for the quantum case.
	
	Thanks to Lemma~\ref{lem:special_points}, we also know that on the base-point $\rho_{\delta\rho}$ it holds that
	\begin{align}
		\frac{\de}{\dt}D_{\rm Fish}(\rho_{\delta\rho},\rho_{\delta\rho}+\delta\rho)=\frac{1}{2}\frac{\de}{\dt}|\delta\rho|^2_{\rm Tr}>0\,.
	\end{align}
	 It is useful to repeat the computation of this derivative here. We obtain
	\begin{align}
		\frac{\de}{\dt}D_{\rm Fish}(\rho_{\delta\rho},\rho_{\delta\rho}+\delta\rho)= \frac{1}{2}\frac{\de}{\dt}\sum_i \frac{\delta\rho_i^2}{(\rho_{\delta\rho})_i} = \sum_i \norbra{\frac{\delta\rho_i\,\delta\dot\rho_i}{(\rho_{\delta\rho})_i} - \frac{1}{2} \frac{\delta\rho_i^2\,(\dot\rho_{\delta\rho})_i}{(\rho_{\delta\rho})_i^2}}\,.\label{eq:C27}
	\end{align}
	The last term does not contribute to the sum: in fact, the first term already gives $\frac{1}{2}\frac{\de}{\dt}|\delta\rho|^2_{\rm Tr}$, so the last term must be zero due to Lemma~\ref{lem:special_points}. On the other hand, it is not difficult to carry out the explicit calculation, giving
	\begin{align}
		\sum_i \frac{\delta\rho_i^2\,(\dot\rho_{\delta\rho})_i}{(\rho_{\delta\rho})_i^2} =\norbra{\sum_i \frac{\delta\rho_i^2\,(\dot\rho_{\delta\rho})_i}{\delta\rho_i^2}} \norbra{\sum_j |\delta\rho_j|}^2= 0\,,\label{eq:C28}
	\end{align}
	where we used the expression of ${\rho}_{\delta\rho}$, together with the fact that $\dot{\rho}_{\delta\rho}$ is traceless.
	Hence, we showed that despite the explicit dependence of the Fisher information on ${\rho}_{\delta\rho}$, there is no contribution  in its derivative coming from $\dot{\rho}_{\delta\rho}$.
	In some sense, we can deduce that the base-point could be \emph{frozen} and the Fisher information would still have a backflow. 
	
	This intuition inspires the following construction. Define the stochastic operator $F$ as
	\begin{align}
		F_{\pi}^{(\varepsilon)}[\sigma]=(1-\varepsilon)\,\pi+\varepsilon \,\sigma\;,
	\end{align}
	where $\pi$ is some arbitrary state and $\varepsilon$ is a small parameter. That is, the operator $F$ is an almost-complete erasure of prior information, sending all the states close to $\pi$. If $\sigma$ and $\tau$ are two states, their difference $\Delta:= \sigma-\tau$ is transformed as:
	\begin{align}
		F_{\pi}^{(\varepsilon)}[\Delta]=F_{\pi}^{(\varepsilon)}[\sigma]- F_{\pi}^{(\varepsilon)}[\tau]=\varepsilon \,\Delta\;.
	\end{align}
	In particular, the Fisher information between $\p$ and $\p+\del$ transforms under the application of this filter as
	\begin{align}
		D^2_{\rm Fish}(F_{\pi}^{(\varepsilon)}[\p],F_{\pi}^{(\varepsilon)}[\p+\del])=\sum_i \varepsilon^2\, \frac{d_i^2}{\pi_i}+\ord{\varepsilon^3}
		\simeq D^2_{\rm Fish}(\pi,\pi+\varepsilon\del)+\ord{\varepsilon^3}\,.\label{eq:C31}
	\end{align}
	Then, choose $\pi$ as $\pi:=\rho_{\delta\rho}$. Then, we can construct a witness of the non-Markovianity of the evolution at time $t$ through the Fisher information, by applying the filter $F_{\rho_{\delta\rho}}^{(\varepsilon)}$ to the final state. This gives:
	\begin{align}
		&\frac{\de}{\dt}\,D^2_{\rm Fish}(F_{\rho_{\delta\rho}}^{(\varepsilon)}[\rho(t)],F_{\rho_{\delta\rho}}^{(\varepsilon)}[\rho(t)+\delta\rho(t)]) =\\
		&\lim_{\dt\rightarrow0} \norbra{\frac{D^2_{\rm Fish}(F_{\rho_{\delta\rho}}^{(\varepsilon)}[\rho(t+\dt)],F_{\rho_{\delta\rho}}^{(\varepsilon)}[\rho(t+\dt)+\delta\rho(t+\dt)])-D^2_{\rm Fish}(F_{\rho_{\delta\rho}}^{(\varepsilon)}[\rho(t)],F_{\rho_{\delta\rho}}^{(\varepsilon)}[\rho(t)+\delta\rho(t)])}{\dt}} \\
		&=\lim_{\dt\rightarrow0} \norbra{\frac{D^2_{\rm Fish}(\rho_{\delta\rho},\rho_{\delta\rho}+\varepsilon\,\delta\rho(t+\dt))-D^2_{\rm Fish}(\rho_{\delta\rho},\rho_{\delta\rho}+\varepsilon\,\delta\rho(t)) + \ord{\varepsilon^3}}{\dt}} \label{eq:C34}\\
		& =\varepsilon^2\, \frac{\de}{\dt}|\delta\rho(t)|^2_{\rm Tr} +\ord{ \varepsilon^3}\label{eq:C35}\,,
	\end{align}
	where in Eq.~\eqref{eq:C34} we used the result from Eq.~\eqref{eq:C31}. Regarding the last equality, on the other hand, it should be noticed that in Eq.~\eqref{eq:C34} $\rho_{\delta\rho}$ does not depend on time. Still, thanks to the remarks made above, we know that $\dot\rho_{\delta\rho}$ does not contribute to the derivative, so we can apply Lemma~\ref{lem:special_points}. Incidentally, in the main text we use the notation $F_{\del} := F^{(\varepsilon)}_{\rho_{\delta\rho}}$, where $\del$ is the diagonal part of $\delta\rho$.

	Now, the quantity constructed is contractive under Markovian dynamics, as the trace distance is contractive. On the other hand, by choosing $\delta\rho $ to be a witness of non-Markovianity for the trace distance, we also obtain a witness in this scenario. There is a shortcoming to this construction, though: in the definition of the post-processing $F_{\rho_{\delta\rho}}^{(\varepsilon)}$ one uses the perturbation $\delta\rho(t)$, and not just $\delta\rho(0)$ (in fact, to be more precise, the filter is actually given by $F^{(\varepsilon)}_{\rho_{\delta\rho(t)}}$). In this way, one has to know in advance the structure of the dynamics to provide an explicit construction. Still, this example serves more as a proof of principle of the possibility of designing post-processing filters to exploit the Fisher information for the detection of non-Markovianity.

\end{appendix}



\bibliography{BIB.bib}

\begin{thebibliography}{10}
\providecommand{\url}[1]{\texttt{#1}}
\providecommand{\urlprefix}{URL }
\expandafter\ifx\csname urlstyle\endcsname\relax
  \providecommand{\doi}[1]{doi:\discretionary{}{}{}#1}\else
  \providecommand{\doi}{doi:\discretionary{}{}{}\begingroup
  \urlstyle{rm}\Url}\fi
\providecommand{\eprint}[2][]{\url{#2}}

\bibitem{breuer2002theory}
H.-P. Breuer, F.~Petruccione \emph{et~al.},
\newblock \emph{The theory of open quantum systems},
\newblock Oxford University Press on Demand,
\newblock \doi{10.1093/acprof:oso/9780199213900.001.0001} (2002).

\bibitem{rivas2012open}
A.~Rivas and S.~F. Huelga,
\newblock \emph{Open quantum systems}, vol.~10,
\newblock Springer,
\newblock \doi{https://doi.org/10.1007/978-3-642-23354-8} (2012).

\bibitem{rivas2014quantum}
{\'{A}}.~Rivas, S.~F. Huelga and M.~B. Plenio,
\newblock \emph{Quantum non-markovianity: characterization, quantification and
  detection},
\newblock Reports on Progress in Physics \textbf{77}(9), 094001 (2014),
\newblock \doi{10.1088/0034-4885/77/9/094001}.

\bibitem{breuer2016colloquium}
H.-P. Breuer, E.-M. Laine, J.~Piilo and B.~Vacchini,
\newblock \emph{Colloquium: Non-markovian dynamics in open quantum systems},
\newblock Rev. Mod. Phys. \textbf{88}, 021002 (2016),
\newblock \doi{10.1103/RevModPhys.88.021002}.

\bibitem{bylicka2017constructive}
B.~Bylicka, M.~Johansson and A.~Ac\'{\i}n,
\newblock \emph{Constructive method for detecting the information backflow of
  non-markovian dynamics},
\newblock Phys. Rev. Lett. \textbf{118}, 120501 (2017),
\newblock \doi{10.1103/PhysRevLett.118.120501}.

\bibitem{BD}
F.~Buscemi and N.~Datta,
\newblock \emph{Equivalence between divisibility and monotonic decrease of
  information in classical and quantum stochastic processes},
\newblock Phys. Rev. A \textbf{93}, 012101 (2016),
\newblock \doi{10.1103/PhysRevA.93.012101}.

\bibitem{Bognachannel}
B.~Bylicka, D.~Chruściński and S.~Maniscalco,
\newblock \emph{Non-markovianity and reservoir memory of quantum channels: a
  quantum information theory perspective},
\newblock Scientific Reports \textbf{4} (2014),
\newblock \doi{10.1038/srep05720}.

\bibitem{LPP}
S.~Lorenzo, F.~Plastina and M.~Paternostro,
\newblock \emph{Geometrical characterization of non-markovianity},
\newblock Phys. Rev. A \textbf{88}, 020102 (2013),
\newblock \doi{10.1103/PhysRevA.88.020102}.

\bibitem{DDSMJ}
D.~D. Santis and M.~Johansson,
\newblock \emph{Equivalence between non-markovian dynamics and correlation
  backflows},
\newblock New Journal of Physics \textbf{22}(9), 093034 (2020),
\newblock \doi{10.1088/1367-2630/abaf6a}.

\bibitem{Janek}
J.~Ko\l{}ody\ifmmode~\acute{n}\else \'{n}\fi{}ski, S.~Rana and A.~Streltsov,
\newblock \emph{Entanglement negativity as a universal non-markovianity
  witness},
\newblock Phys. Rev. A \textbf{101}, 020303 (2020),
\newblock \doi{10.1103/PhysRevA.101.020303}.

\bibitem{nielsen2002quantum}
M.~A. Nielsen and I.~Chuang,
\newblock \emph{Quantum computation and quantum information},
\newblock \doi{doi:10.1017/CBO9780511976667} (2002).

\bibitem{bengtsson-zyczkowski2006}
I.~Bengtsson and K.~Zyczkowski,
\newblock \emph{Geometry of Quantum States: An Introduction to Quantum
  Entanglement},
\newblock Cambridge University Press,
\newblock \doi{10.1017/CBO9780511535048} (2006).

\bibitem{sidhu2020geometric_perspective}
J.~S. Sidhu and P.~Kok,
\newblock \emph{Geometric perspective on quantum parameter estimation},
\newblock AVS Quantum Science \textbf{2}(1), 014701 (2020),
\newblock \doi{10.1116/1.5119961}.

\bibitem{wootters1981statistical}
W.~K. Wootters,
\newblock \emph{Statistical distance and hilbert space},
\newblock Phys. Rev. D \textbf{23}, 357 (1981),
\newblock \doi{10.1103/PhysRevD.23.357}.

\bibitem{Cramer46}
H.~Cramer,
\newblock \emph{Mathematical methods of statistics},
\newblock Princeton University Press,
\newblock ISBN 0691005478,
\newblock \doi{https://doi.org/10.1515/9781400883868} (1946).

\bibitem{Rao92}
C.~R. Rao,
\newblock \emph{Information and the accuracy attainable in the estimation of
  statistical parameters},
\newblock Kotz S., Johnson N.L. (eds) Breakthroughs in Statistics. Springer
  Series in Statistics (Perspectives in Statistics)  (1992),
\newblock \doi{10.1007/978-1-4612-0919-5}.

\bibitem{Chernoff52}
H.~Chernoff,
\newblock \emph{{A Measure of Asymptotic Efficiency for Tests of a Hypothesis
  Based on the sum of Observations}},
\newblock The Annals of Mathematical Statistics \textbf{23}(4), 493  (1952),
\newblock \doi{10.1214/aoms/1177729330}.

\bibitem{kullback1997information}
S.~Kullback,
\newblock \emph{Information theory and statistics},
\newblock Dover. Publications,
\newblock ISBN 978-0486696843 (1997).

\bibitem{csiszarInformationTheoryStatistics2004}
I.~Csisz{\'a}r and P.~C. Shields,
\newblock \emph{Information {{Theory}} and {{Statistics}}: {{A Tutorial}}},
\newblock Foundations and Trends\textregistered{} in Communications and
  Information Theory \textbf{1}(4), 417 (2004),
\newblock \doi{10.1561/0100000004}.

\bibitem{lu2010quantum}
X.-M. Lu, X.~Wang and C.~Sun,
\newblock \emph{Quantum fisher information flow and non-markovian processes of
  open systems},
\newblock Physical Review A \textbf{82}(4), 042103 (2010),
\newblock \doi{https://doi.org/10.1103/PhysRevA.82.042103}.

\bibitem{cencov2000statistical}
N.~N. Cencov,
\newblock \emph{Statistical decision rules and optimal inference},
\newblock 53. American Mathematical Soc.,
\newblock \doi{https://doi.org/10.1090/mmono/053} (2000).

\bibitem{campbell1986extended}
L.~L. Campbell,
\newblock \emph{An extended {\v{c}}encov characterization of the information
  metric},
\newblock Proceedings of the American Mathematical Society \textbf{98}(1), 135
  (1986),
\newblock \doi{https://doi.org/10.2307/2045782}.

\bibitem{petzMonotoneMetricsMatrix1996}
D.~Petz,
\newblock \emph{Monotone metrics on matrix spaces},
\newblock Linear Algebra and its Applications \textbf{244}, 81 (1996),
\newblock \doi{10.1016/0024-3795(94)00211-8}.

\bibitem{hall2014canonical}
M.~J. Hall, J.~D. Cresser, L.~Li and E.~Andersson,
\newblock \emph{Canonical form of master equations and characterization of
  non-markovianity},
\newblock Physical Review A \textbf{89}(4), 042120 (2014),
\newblock \doi{10.1103/PhysRevA.89.042120}.

\bibitem{chessa2021quantum}
S.~Chessa and V.~Giovannetti,
\newblock \emph{Quantum capacity analysis of multi-level amplitude damping
  channels},
\newblock Communications Physics \textbf{4}(1), 22 (2021),
\newblock \doi{10.1038/s42005-021-00524-4}.

\bibitem{lonigro2022quantum}
D.~Lonigro and D.~Chru\ifmmode \acute{s}\else
  \'{s}\fi{}ci\ifmmode~\acute{n}\else \'{n}\fi{}ski,
\newblock \emph{Quantum regression beyond the born-markov approximation for
  generalized spin-boson models},
\newblock Phys. Rev. A \textbf{105}, 052435 (2022),
\newblock \doi{10.1103/PhysRevA.105.052435}.

\bibitem{BS2021}
F.~Buscemi and V.~Scarani,
\newblock \emph{Fluctuation theorems from bayesian retrodiction},
\newblock Phys. Rev. E \textbf{103}, 052111 (2021),
\newblock \doi{10.1103/PhysRevE.103.052111}.

\bibitem{SS2022}
J.~Surace and M.~Scandi,
\newblock \emph{State retrieval beyond {B}ayes' retrodiction},
\newblock {Quantum} \textbf{7}, 990 (2023),
\newblock \doi{10.22331/q-2023-04-27-990}.

\bibitem{scandi2023quantum}
M.~Scandi, P.~Abiuso, J.~Surace and D.~De~Santis,
\newblock \emph{Quantum fisher information and its dynamical nature},
\newblock Reports on Progress in Physics \textbf{88}(7), 076001 (2025),
\newblock \doi{10.1088/1361-6633/ade453}.

\bibitem{lesniewskiMonotoneRiemannianMetrics1999}
A.~Lesniewski and M.~B. Ruskai,
\newblock \emph{Monotone {{Riemannian Metrics}} and {{Relative Entropy}} on
  {{Non-Commutative Probability Spaces}}},
\newblock Journal of Mathematical Physics \textbf{40}(11), 5702 (1999),
\newblock \doi{10.1063/1.533053},
\newblock \eprint{math-ph/9808016}.

\end{thebibliography}

\nolinenumbers

\end{document}